\documentclass[a4paper,reqno]{amsart}

\usepackage{macros,cancel}
\DeclareMathOperator{\Dens}{Dens}
\DeclareMathOperator{\Ind}{Ind}
\DeclareMathOperator{\RW}{RW}

\def\ud{d_{\mc O}}
\def\lo{L^{\mc O}}

\mathchardef\mhyp="2D

\bibliography{new-citations.bib}

\usepackage{bbm}

\def\kk{{\mathbbm{k}}}

\newcommand{\hrho}{\hat{\rho}}

\allowdisplaybreaks

\title{Kodaira--Spencer theory for Courant algebroids}
\author{Julian~Kupka, Ingmar~Saberi, Charles~Strickland-Constable, Fridrich~Valach}
\address{Department of Physics, Astronomy and Mathematics, University of Hertfordshire, College Lane, Hatfield, AL10 9AB, United Kingdom}
\email{j.kupka@herts.ac.uk}
\email{c.strickland-constable@herts.ac.uk}
\email{i.saberi@herts.ac.uk}
\address{Mathematical Institute, Faculty of Mathematics and Physics, Charles University, Prague 186 75, Czech Republic}
\email{fridrich.valach@matfyz.cuni.cz}

\begin{document}

\begin{abstract}
Studying Courant algebroids on dg ringed manifolds, we observe that the associated Royten\-berg--Weinstein $L_\infty$ algebra admits a local structure reminiscent of a shifted contact structure. On a dg ringed manifold with an $n$-orientation, its symplectification produces a sheaf of $(2-n)$-shifted symplectic formal moduli problems, which we call the Courant contact model. This construction can be interpreted as a ($\Z/2\Z$-graded) theory in the Batalin--Vilkovisky formalism whenever $n$ is odd.
After developing the procedure of reduction and extension of scalars, we show how twisted backgrounds in type I supergravity naturally lead to Courant algebroids over the Dolbeault complex. Specialising to the case of a Calabi--Yau fivefold, we show that the Courant contact model for that Courant algebroid is equivalent to a central extension of minimal type I BCOV theory.
Inspired by this, we 
extend the conjecture of Costello and Li and place it within the setting of generalized geometry,
conjecturing a description of the BV formulation of type I supergravity in general twisted backgrounds.
\end{abstract}

\maketitle

\tableofcontents

\spacing{1.2}

\section{Introduction}
\label{sec:intro}

We begin with two brief, and independent, motivations to put our work in a broader context --- the first one from the perspective of shifted symplectic geometry and the second from the viewpoint of twisted supergravity. For definitions of the terms we refer the reader to the bulk of the text.

\subsection{First motivation: Courant algebroids and shifted symplectic and contact structures}
\label{ssec:mot1}
Courant algebroids can be regarded as a ``horizontal categorification'' of quadratic Lie algebras (i.e.\ Lie algebras with invariant pairing), in the sense that a Courant algebroid over a point is precisely a finite-dimensional quadratic Lie algebra \cite{liu1997manin}. From the modern perspective, an invariant pairing on a Lie algebra $\fg$ defines a $2$-shifted symplectic structure on its infinitesimal classifying space $B\fg$.

It is thus unsurprising that Courant algebroids are also closely connected to $2$-shifted symplectic geometry.
Given a Courant algebroid $E \to M$, there is a well-known procedure that constructs a (finite-dimensional) $2$-shifted symplectic dg manifold from~$E$ \cite{Roytenberg:2002nu,some-title}. This can be identified with the total space of the bundle $T^*[2]M\oplus E[1] \to M$ (though not canonically). Applying the AKSZ construction \cite{aksz} to a space $\Sigma$ with an $n$-orientation, one obtains a $(2-n)$-shifted mapping space. When $n=3$, this is a field theory in the BV formalism on $\Sigma$, called the \emph{Courant sigma model} \cite{Ikeda:2002wh, Roytenberg:2006qz}.

In this paper, we are interested in a distinct construction, albeit with many ingredients that are familiar from the standard story. The key difference is that we are interested in thinking of $E$ as a local structure on~$M$. 
Many Courant algebroids do arise in a fashion that makes this perspective natural; for example, the appropriate analogue of the Atiyah algebroid for a gerbe is a Courant algebroid, or more precisely a certain sheaf of $L_\infty$ algebras constructed from a Courant algebroid by Roytenberg and Weinstein~\cite{let,RW,BunkShahbazi}.
We will think of sections of Roytenberg and Weinstein's local Lie algebra as \emph{infinitesimal generalized diffeomorphisms}. It is natural to expect that one should be able to construct a sheaf of moduli spaces on~$M$, akin to an infinitesimal classifying space for generalized diffeomorphisms, or a horizontal categorification of the example $B\fg$ from above.

One guiding example might be the following. A Courant algebroid on a positive-dimensional manifold $M$, but with vanishing anchor map, is a bundle of quadratic Lie algebras on~$M$ with an invariant pairing valued in smooth functions. If we equip $M$ with the additional datum of a volume form, then $BE$---which is well-defined in this case, since $E$ is a local Lie algebra---has a local $2$-shifted symplectic structure.

In general, this procedure cannot work, since $BE$ does not make sense. One needs to 
enlarge one's model, placing $E$ into a local Lie algebra in which the failure of the Jacobi identity for the antisymmetrized Courant bracket is rendered exact. This construction is provided by the local Lie algebra of Roytenberg and Weinstein~\cite{RW}, denoted $\RW(E)$. Intuitively, if one thinks of the example in which~$E$ is the generalized tangent bundle $TM \oplus T^*M$, the additional generator allows us to view sections of~$T^*M$ as one-forms up to the natural gauge invariance relation in which exact one-forms are rendered trivial.

There is no natural $2$-shifted symplectic structure on the Roytenberg--Weinstein local Lie algebra, since the generator parameterizing this higher gauge invariance does not naturally pair with anything. But we have seen above that even defining the pairing on sections of~$E$ requires the introduction of a volume form. If we view this choice as an extra modulus, there is a natural pairing which makes the volume form dual to the higher gauge invariance. We can summarize this with the slogan:
\[\text{\emph{The moduli space of volume forms up to generalized diffeomorphisms is naturally 2-shifted symplectic.}}\]

There is a natural action of nonvanishing functions on the local Lie algebra describing this moduli space, by rescaling the volume form. With respect to this action, the symplectic form is exact of weight one. 
In the classical setting, symplectic structures that are homogeneous of weight one with respect to a free $\R^\times$-action correspond to contact structures on the quotient. 
One thus expects a local version of a $2$-shifted contact structure on the moduli space defined by the Roytenberg--Weinstein local Lie algebra. 
Prior works on shifted contact structures include
Grabowski in~\cite{Grabowski}, followed by Berktav~\cite{Berktav1,Berktav2,Berktav3} in the setting of derived algebraic geometry and Maglio, Tortorella, and Vitagliano~\cite{MTV,MaglioThesis} in smooth derived geometry. 
We do not develop the theory of local shifted contact structures in detail in this paper, choosing to work in the symplectic framework instead. Nevertheless, we refer to the local formal moduli problem we construct as the \emph{Courant contact model}. 

Another potential name is reflected in the title of the paper. The classical Kodaira--Spencer map relates the tangent space to the moduli space of complex structures and the degree-one cohomology of the sheaf of holomorphic vector fields.  One classical perspective on generalized geometry suggests replacing the tangent bundle by the generalized tangent bundle; our study of the relation between its sheaf of sections and natural moduli spaces motivates our choice of title.

The above considerations can be generalized to the setting of Courant algebroid structures on smooth dg ringed manifolds, defined below. In this setting, the volume form is replaced by a choice of $n$-orientation, and we obtain a local $(2-n)$-shifted  symplectic structure. 
When $n$ is odd, we obtain a $\Z/2\Z$-graded BV theory on~$M$ associated to a Courant algebroid;  in the simplest example, its fundamental fields are an $\cO$-gerbe with connective structure and an $\cO$-orientation. Just as for Chern--Simons theory, this construction gives a $\Z$-graded theory precisely when $n = 3$. However, all odd values of $n$ give interesting $\Z/2\Z$-graded theories. In this paper, we will focus on the case $n=5$ and its relations to holomorphic twists of type I supergravity.\footnote{We note that the Courant algebroid formulation applies equally well to any choice of gauge group, which we therefore do not specify further. In keeping with this, we will use the terms type I supergravity and $D=10$, $\ms N=1$ (or more properly $\ms N=(1,0)$) supergravity  interchangeably.}

\subsection{Second motivation: twisted supergravity}
\label{ssec:mot2}
In a supergravity theory, special backgrounds exist in which the ghosts for local supersymmetry transformations take nonvanishing background values. The existence of such a background is related to the existence of Killing spinors, and thus constrains the other background fields: for example, the metric must usually be a Calabi--Yau metric or otherwise exhibit reduced holonomy. Such backgrounds can be profitably studied in the Batalin--Vilkovisky formalism, but full formulations of maximal or half-maximal supergravity theories---in which the classical master equation is solved with all fermions included---were scarcely available in the literature. For type I supergravity in ten dimensions, this situation has recently been remedied in~\cite{bvsugra,fullsugra}, using techniques from generalized geometry.

In~\cite{CLBCOV1}, Costello and Li introduced a theory they called ``$(1,0)$ BCOV theory,'' defined on Calabi--Yau threefolds. 
This theory is a variant of their original ``BCOV theory,'' which builds on and generalizes the 
``Kodaira--Spencer theory of gravity''
studied by Bershadsky, Cecotti, Ooguri, and Vafa~\cite{BCOV}.
Building upon this, Costello and Li generalized their theory to define what is now called \emph{type I BCOV theory} in~\cite{CLtypeI}, making use of the unoriented B-model in topological string theory. 
Costello and Li conjectured that type I BCOV theory should be related to a target-space holomorphic twist of type I string theory; restricting to the massless sector, their conjecture implies that a subsector known as \emph{minimal type I BCOV theory} should be the holomorphic twist of type I supergravity (without additional gauge fields) in ten dimensions, possibly after minor modifications that allow for a fully nondegenerate $(-1)$-shifted symplectic structure.

Taken together, various results from the past few years come close to establishing this conjecture:
\begin{itemize}
\item Using the pure spinor superfield formalism, \cite{spinortwist} rigorously computed the twists of the eleven-dimensional and type IIB supergravity multiplets, working in the free limit on a flat background. Given the descriptions of types IIA and IIB together, this led to a computation of the free limit of holomorphic type I supergravity, which was identified with the fields of a particular \emph{potential theory} for minimal type I BCOV theory. (This is the minor modification indicated above.)
\item Candidate interactions for twisted eleven-dimensional supergravity were constructed in~\cite{RSW} and shown to pass various consistency checks. In particular, after compactification on an interval, the BV theory constructed there was shown to match a particular potential theory for minimal type I BCOV theory at the full nonlinear level.
\item Finally, \cite{CY2} proved that eleven-dimensional superspace naturally admits a Poisson structure, and that eleven-dimensional supergravity (as described in the pure spinor formalism by Cederwall~\cite{Ced-towards,Ced-11d}) is the moduli problem governed by the local Lie algebra of Hamiltonian vector fields. This result was used to construct the pure spinor description of all twists of eleven-dimensional supergravity as Poisson Chern--Simons theories, in keeping with another set of results by Costello~\cite{CostelloMtheory2}.
\end{itemize}
To complete the chain, it would suffice to prove that the Poisson Chern--Simons theory of~\cite{CY2} and the eleven-dimensional supergravity theory of~\cite{RSW} are equivalent, which would amount to a computation using the homotopy transfer theorem. 

In this paper, we will instead take a different approach, connected to the recent developments of~\cite{bvsugra,fullsugra}. Since the construction of the full supergravity theory presented there takes place in the language of generalized geometry, it is natural to expect that the theory in twisted backgrounds should also admit a description in terms of Courant algebroids. 
We confirm this expectation by connecting type I BCOV theory to the theory of Courant algebroids directly: The Courant contact model we construct here produces a natural theory in the BV formalism when the input datum is taken to be (the Dolbeault resolution of) a holomorphic Courant algebroid on a Calabi--Yau manifold of odd complex dimension. Specializing to the standard holomorphic Courant algebroid on a Calabi--Yau fivefold, we prove (Theorem~\ref{thm:equivalence}) that the Courant contact model is equivalent to the potential theory for minimal type I BCOV theory studied in~\cite{RSW}. As such, our model provides an extension of type I BCOV to the realm of generalized geometry, in particular covering flux backgrounds.

We see this as groundwork for a second, independent route to rigorously establishing the Costello--Li conjecture on  type I supergravity  in twisted backgrounds, which is in some sense more direct. The result mentioned above suggests a description of the twist at the level of the underlying Courant algebroids.
We formulate a precise statement below (Main Conjecture in \S\ref{subsec:conj}), extending Costello and Li's conjecture. 
To sketch the statement here: 
A twisted supergravity background determines an involutive isotropic subbundle $L \subset E\otimes \mb C$; 
in turn, such a datum determines a holomorphic Courant algebroid by reduction~\cite[Appendix~A]{GenKaehler}. 
We conjecture that type I supergravity in such a background is equivalent to the Courant contact model for (the Dolbeault resolution of) the holomorphic Courant algebroid obtained by reducing~$E\otimes\mb C$ along~$L$.

We have established this conjecture at the linearized level. Despite the substantial simplifications provided by the Courant algebroid formulation on both sides, though, a proof at the interacting level on general twisted backgrounds---while very much within reach---is still a nontrivial task. We therefore defer this to a second paper, currently work in progress. For the sake of exposition, we will present the linearized result in that paper as well.

\subsection{Outline}
The paper falls roughly into two parts, the first entirely independent of twisted supergravity (corresponding to the first motivation in~\S\ref{ssec:mot1}), the second, starting with \S\ref{sec:twisted-sugra-bg}, dealing with connections to BCOV theory and twisted supergravity (and thus corresponding to~\S\ref{ssec:mot2}).

We start in~\S\ref{sec:two} by reviewing some necessary background material, beginning with the notion of a dg ringed manifold, as developed by Costello and Grady--Gwilliam, followed by orientations, local Lie algebras, and shifted symplectic and contact structures. In \S\ref{sec:three} we introduce the notion of a (strict) Courant algebroid over a dg ringed manifold, discuss its basic properties and characterizations, and provide various examples. The subsequent \S\ref{sec:two-constructions} then develops two operations on Courant algebroids, namely extension of scalars and reduction, which are used in the remainder of the paper. \S\ref{sec:five} describes the Courant contact model and its relation to the Roytenberg--Weinstein $L_\infty$ algebra and contact geometry.

In the penultimate \S\ref{sec:twisted-sugra-bg} the preceding results are applied to the case of the twist of supergravity, leading to a Courant algebroid (over the Dolbeault complex) associated to a given twist datum. In particular \S\ref{subsec:cy} spells out the details of the Calabi--Yau case. Finally, in \S\ref{sec:last} we formulate the conjectures regarding the general equivalence of the resulting Courant contact model and the holomorphic twist of type I supergravity, and prove that in the Calabi--Yau case the model coincides with (a central extension of) the minimal type I BCOV theory. This establishes the match with the conjecture of Costello and Li.

\subsection{Acknowledgments} 
CSC is supported by an EPSRC New Investigator Award, grant number EP/X014959/1. FV is supported by the Charles University grant PRIMUS/25/SCI/018. This work is funded by the Deutsche Forschungsgemeinschaft (DFG, German Research Foundation) under Projektnummer 517493862 (Homologische Algebra der Supersymmetrie: Lokalit\"at, Unitarit\"at, Dualit\"at). IAS further thanks the Free State of Bavaria for support while this work was being performed. Part of the work was done while FV and IAS were in residence at Institut Mittag-Leffler in Djursholm in 2025 during the program Cohomological Aspects of Quantum Field Theory, supported by the Swedish Research Council under grant no.\ 2021-06594. The authors would like to thank Institut Mittag-Leffler, as well as the program organizers, for providing a stimulating environment for collaboration and for developing this work. Further thanks are due to the organizers of the 46th annual winter school on Geometry and Physics in Srn\'\i, Czechia; a portion of this work was completed while FV and IAS were in attendance there.
We thank A.~Ashmore, M.~Cederwall, O.~Gwilliam, F.~Hahner, S.~Raghavendran, P.~\v{S}evera, E.~E.~Svanes, and B.~Williams for valuable discussions, as well as W.~Luciani for helpful comments on the draft.
No new data was collected or generated during the course of this research.

\subsection{Notation and conventions}
Throughout, we work in a symmetric monoidal category of vector spaces, graded either by $\Z \times \Z/2\Z$ or by $\Z/2\Z$. In the first setting, the integer grading is called ``cohomological degree'' or ``ghost number,'' and the $\Z/2\Z$ factor is called ``fermion parity'' or ``intrinsic parity.''  $\Pi$ denotes the parity shift functor, and square brackets denote shifts of the integer grading; we adhere to the convention $V[i]^j \coloneqq V^{i+j}$. The Koszul rule of signs is determined by the sum of the degree and the intrinsic parity modulo two; this totalization is just called ``parity,'' and is denoted with the symbol $|\slot|$. In the $\Z/2\Z$-graded setting, only the (total) parity is defined. 
When writing a formula for an object valued in a free module over a cdga, we frequently leave the internal Koszul sign implicit, writing explicit formulas with the signs appropriate for generators of the module.
 Our conventions are cohomological throughout; every differential has degree $(+1,+)$. Duals reverse grading. The Hamiltonian vector field $X_H$ of a function $H$ is defined by $i_{X_H}\omega=-dH$. Further conventions can be found in~\cite{Kupka:2025ulg}.

\section{Background}
\label{sec:two}

\subsection{Dg ringed manifolds}
\label{ssec:dgringed}
In~\cite{KevinWitten}, Costello suggests an approach to derived differential geometry in which a particularly nice class of spaces, called \emph{(smooth) dg ringed manifolds}, plays an important role. 
(The theory was further developed in~\cite{GG1,GG2}, under the closely-related nomenclature \emph{$L_\infty$ spaces}.) 
In this paper, we are interested in defining Courant algebroids over such spaces. We quickly review some definitions to set the stage and to fix terminology.

Fix a base field $\field = \R$ or $\C$.  Recall that a \emph{cdga} is a unital $\Z$-graded commutative algebra over $\field$, equipped with a derivation of degree $+1$, denoted $\bar\partial$ and called the \emph{differential}. A \emph{morphism} of two cdgas is a morphism of algebras preserving both the grading and the differential. If $A$ is a cdga, a \emph{cdga over $A$} is a cdga $B$ equipped with a morphism $A \to B$.
Given a cdga $(A,\dbar)$, we can forget the differential to obtain another cdga, denoted $A^\#$. 

We now recall Costello's definition.
\begin{dfn}[{\cite[Definition~1.0.1]{KevinWitten}}]
\label{def:dgringed}
A \emph{smooth dg ringed manifold} (over $\field$) consists of the data of:
\begin{itemize}
\item a smooth $d$-dimensional manifold $M$, with sheaf of smooth $\field$-valued functions denoted $C^\infty$ and sheaf of smooth de Rham forms denoted $\Omega^\bu$;
\item a sheaf $\cO$ of cdga's on~$M$, with differential denoted $\dbar$, and further equipped with maps
\[
\begin{tikzcd}
\Omega^\bu \ar[r, "j"] & \cO \ar[r, "q"] & C^\infty.
\end{tikzcd}
\]
of cdga's.
\end{itemize}
These data are required to satisfy the following properties:
\begin{itemize}
\item $\cO^\#$ is a locally free module of finite total rank over~$\Omega^0 = C^\infty$;
\item The kernel of~$q$ is a sheaf of nilpotent ideals in~$\cO$;
\item The cohomology of~$\cO(U)$ is concentrated in nonpositive degrees over any sufficiently small open subset of~$U$.
\end{itemize}
\end{dfn}
A map of smooth dg ringed manifolds is defined just as for a map of ringed spaces in general. We now recall some important examples from~\cite{KevinWitten}.
\begin{eg}
\label{eg:Msm}
Any smooth manifold $M$ is a smooth dg ringed manifold by taking $\cO = C^\infty$, with zero differential and concentrated in degree zero. The map $j$ is defined by quotienting by the sheaf of nilpotent ideals $\Omega^{\geq 1} \subset \Omega^\bu$. This smooth dg ringed manifold is again denoted $M$.
\end{eg}
\begin{eg}
\label{eg:MdR}
There is another natural structure on $M$, defined by taking $\cO = \Omega^\bu$ with $j$ the identity map and $q$ the quotient map identified above. We call the resulting smooth dg ringed manifold $M_\dR$.
\end{eg}
\begin{eg}
\label{eg:Mdbar}
If $M$ is a complex manifold, we can equip it with 
the sheaf of Dolbeault forms $\Omega^{0,\bullet}$, with $\bar\partial$ being the Dolbeault operator. We denote the resulting smooth dg ringed manifold by $M_\dbar$. 
Note that the holomorphic Poincar\'e lemma shows that $\cO$ is locally quasi-isomorphic to the sheaf $\cC_\hol$ of holomorphic functions, placed in degree zero. 
  \end{eg}
  
We will consider sheaves of dg~$\cO$-modules over dg ringed manifolds. Recall that a dg module over a cdga $A$ is a $\Z$-graded cochain complex $N$, together with a degree-preserving cochain map $A \otimes_\field N \to N$ making $N^\#$ a left $A^\#$-module. 
 An \emph{inner product} on a dg $A$-module is a self-dual $A$-linear isomorphism from the module to its dual.
In this paper, we only deal with sheaves of modules that are locally free of finite rank over~$\Omega^0$ after forgetting the differential; such dg modules are termed \emph{semifree}.
As an example, we note that $\Der\cO$ is canonically such a sheaf of~$\cO$-modules on any dg ringed manifold.

\begin{rmk}
\label{rmk:homdeg}
While all conditions in the definition quoted above apply to the examples we treat in this paper, we make no essential use of several of the restrictions Costello imposes. In particular, the third condition in Definition~\ref{def:dgringed}, requiring cohomology in positive degrees to vanish locally, plays no role for us, and the finite-rank condition appears only at a technical level. We fully anticipate that our results will carry over, with appropriate minor modifications, to a setting where the third condition is eliminated and the first condition is weakened to require only degree-wise finiteness for the $\Omega^0$-module $\cO^\#$.
\end{rmk}

On a formal level, the same definition can be taken to apply, \emph{mutatis  mutandis}, to any locally ringed space $(M,\cC)$, where $\cC$ replaces the smooth functions and $\Omega^\bu$ is interpreted as the algebraic de Rham complex $\Omega^\bu_{\cC/\field}$ (as defined, for example, in~\cite[\S17.30]{Stacks}). In full generality, this notion will not in general be as useful---in particular, since $\cC$ need not be a fine sheaf. To emphasize, we do not generalize Costello's construction of a derived stack from a dg ringed manifold to such a setting.

We will nevertheless work with instances of such a generalization. Most importantly, we define a \emph{complex dg ringed manifold} (necessarily over~$\C$) to be given by Definition~\ref{def:dgringed}, but where $M$ is a complex manifold and $\cC = \cC_\hol$ is the sheaf of holomorphic functions. Then we can repeat~Example~\ref{eg:Msm} to view $(M,\cC_\hol)$ as a complex dg ringed manifold, which we denote $M_\hol$. 

Our use of this example is essentially for technical reasons. Indeed, Dolbeault resolution defines a functor from complex dg ringed manifolds to smooth dg ringed manifolds over~$\C$; this functor  carries $M_\hol$ to $M_\dbar$. We will need both notions in order to discuss the Dolbeault resolution of a holomorphic Courant algebroid, which is one of our aims in the sequel.

Since a supermanifold is a locally ringed space, the observation above equally well allows us to define the notion of a \emph{smooth dg ringed supermanifold}. Taking $M$ to be a smooth supermanifold, Examples~\ref{eg:Msm} and~\ref{eg:MdR} apply out of the box. If $M$ is equipped with a superconformal structure, the corresponding \emph{pure spinor superspace}, as defined in~\cite[{\S2}]{SCA}, is naturally a smooth dg ringed supermanifold that plays the role of~$M_\dbar$ (if one omits the third restriction from~Definition~\ref{def:dgringed} as discussed in Remark~\ref{rmk:homdeg}). 

For other, more thorough treatments of related background material, we refer the reader to the discussion of dg manifolds in~\cite[{\S2}]{Carchedi} (though note that the discussion there is focused on the nonpositively graded case), as well as to~\cite{Nuiten-thesis}. 

\subsection{Orientations}
We will require a notion of orientation which, at least roughly speaking, is a simple special case of the notion of $\cO$-orientation introduced in~\cite[Definition~2.4]{PTVV}. (Compare also to Ginzburg's notion of a Calabi--Yau $n$-algebra~\cite{Ginzburg}.)

Recall that we required, for any smooth dg ringed manifold, that $\cO^\#$ be a locally free sheaf of~$C^\infty$-modules of finite rank. Such an object is the smooth sections of a finite-dimensional (graded) vector bundle. To any vector bundle $E$ on a smooth manifold~$M$ is associated a bundle
\[
E^! \coloneqq E^\vee \otimes \Dens(M),
\]
where $E^\vee$ is the dual in the category of vector bundles over~$M$ and $\Dens(M)$ is the bundle of densities, which is the dualizing complex (in the sense of Verdier duality) in this setting. 
$E^!$ is called the \emph{shriek dual} of~$E$, and is characterized by the property that
sections of~$E$ are in duality with compactly supported distributional sections of~$E^!$ over the ground field~$\field$. This construction extends to cochain complexes of vector bundles, and thus to~$\cO$.

\begin{dfn}
Let $(M,\cO)$ be 
a smooth dg ringed manifold. An \emph{$n$-orientation} is the datum of a quasi-isomorphism
\[
\omega\colon \cO[n] \to \cO^!
\]
of sheaves of dg $\cO$-modules. An {$n$-orientation} is called \emph{strict} when it is an isomorphism of cochain complexes.
\label{dfn:or}
\end{dfn}

\begin{eg} 
On a smooth manifold $(M,C^\infty)$, we see that $\cO^!$ is the sheaf of densities supported in degree zero. Thus an orientation (here, a $0$-orientation) corresponds to a choice of a globally nonvanishing density (which always exists).
\end{eg}

\begin{eg}
For the de Rham space $M_\dR$ of an oriented smooth $d$-manifold, we note that
\[
\left( \Omega^\bu \right)^! \cong \Omega^\bu [d].
\]
Thus $M_\dR$ admits a strict $d$-orientation, which consists of a choice of globally nonvanishing function. In fact, the unit gives a canonical choice.
\end{eg}

\begin{eg}
Let $M$ be a complex manifold of real dimension $d = 2n$. Then 
\[
\left( \Omega^{0,\bu} \right)^! = \Omega^{n,\bu}[n].
\]
Thus $M_\dbar$ admits an $n$-orientation precisely when $M$ is Calabi--Yau, 
and the datum of the orientation is a choice of globally nonvanishing holomorphic top form.
\end{eg}
\begin{rmk}
In this paper, a \emph{Calabi--Yau manifold} will mean a complex $n$-manifold equipped with an $n$-orientation (holomorphic volume form). The notion of a Ricci-flat K\"ahler metric will play no direct role, and a ``Calabi--Yau structure'' does not involve the choice of such a metric.
\end{rmk}
Pure spinor superspace admits an orientation precisely when the variety of square-zero elements in the supertranslation algebra is Gorenstein~\cite{perspectives}; in general, this orientation is not strict.

\subsection{Local Lie algebras}
\label{ssec:locLie}

The notion of \emph{local Lie algebra} (or, synonymously for us, local $L_\infty$ algebra) given in~\cite{CG2} allows one to work with sheaves of formal moduli problems over a smooth manifold~$M$ (``local moduli problems''). This notion is appropriate for the study of classical perturbative field theories, infinitesimal symmetries, perturbations of background fields, and the like: it incorporates the physical requirement of locality with a derived approach to the geometry of (formal) moduli spaces. We recall this definition here, with a slight modification that places it in the setting of~\S\ref{ssec:dgringed}.
\begin{dfn}[{Compare~\cite[Definition~3.1.3.1]{CG2}}]\label{dfn:local-lie-algebra}
Let $(M,\cO)$ be a smooth dg ringed manifold. A \emph{local Lie algebra over~$\cO$} consists of:
\begin{itemize}
\item a sheaf $\cL$ of dg $\cO$-modules, with $\cL^\#$ locally free of finite rank over~$C^\infty$;
\item a collection of graded-antisymmetric polydifferential operators
\[
\mu_n\colon {\cL}^{\otimes n} \to \cL[2-n]
\]
for $n \geq 2$, which, together with the differential, give $\cL$ 
the structure of a sheaf of $L_\infty$ algebras.
\end{itemize}
\end{dfn}
Given the assumptions on~$\cO$, any local Lie algebra over~$\cO$ is automatically a local Lie algebra over~$C^\infty$, and thus in the sense of~\cite{CG2}. The definition applies, \emph{mutatis mutandis}, to any the setting with $\cC$ fine; thus, for example, to smooth supermanifolds. 
\begin{rmk}
\label{rmk:resolved}
We emphasize that, in the context of applications to field theory, the definition of a local Lie algebra requires a presentation by unconstrained free fields. All gauge symmetries and all constraints must be appropriately resolved by the introduction of ghosts, antifields, or analogous generators in nonzero cohomological degree. This requirement ensures that field theories globalize to arbitrary spacetimes correctly, without needing to take care of issues related to sheaf cohomology.
\end{rmk}

\begin{rmk}
The fundamental theorem of derived deformation theory gives an equivalence between formal moduli problems and $L_\infty$ algebras.
(See~\cite{Pridham,DAGX}, building on earlier ideas of  Deligne and Drinfeld, among others.) This equivalence underlies the approach to perturbative field theories and their symmetries developed in~\cite{CG1,CG2}. We do not give a full review here, but offer a very brief recollection to fix the terminology and notation we will use in the sequel; more details are in~\cite[Appendix~A]{CG2}, and in the further literature.

 Following the exposition in~\cite[Appendix~A]{ButsonYoo}, we can capture the fundamental ingredients of this theorem in a diagram for any derived stack $X$ and any choice of point $x \in X$:
\begin{equation}
\begin{tikzcd}
& & X \ar[dr] \ar[dl] & &   \\
\text{(formal derived stack)} & \widehat{X}_x \ar[rr, shift left, "{\mathbb{T}[-1]}"] & & \mathbb{T}_x[-1] X \ar[ll, shift left, "B"] & \text{($L_\infty$ algebra)}
\end{tikzcd}
\label{eq:dStk}
\end{equation}
The $-1$-shifted tangent complex at $x$, $\mathbb{T}_x[-1]X$, carries a natural $L_\infty$ algebra structure. By definition, a formal derived stack has a  unique point, so the operation that sends a formal derived stack to the $-1$-shifted tangent complex at the basepoint is canonical. This is the arrow from left to right in the bottom row of~\eqref{eq:dStk}.

 To go from right to left, we use the \emph{functor of Maurer--Cartan elements} (here denoted $B$) as studied in~\cite{Hinich,Getzler}. 
 The technical details of this construction will not be important for us in what follows, and the reader can safely ignore them, but we mention them to highlight a subtlety with grading conventions.
 A formal derived stack is presented by its functor of points, which is a functor from dg Artin algebras to simplicial sets satisfying certain descent properties. 
 To an Artin algebra $A$ with maximal ideal $\fm$, the functor of Maurer--Cartan elements $B\fg$ assigns the simplicial set whose $n$-simplices are locally constant solutions of the Maurer--Cartan equation over the standard $n$-simplex: 
 \[
x \in  \fg \otimes \fm \otimes \Omega^\bu(\Delta^n) \colon  \sum_{n \geq 1} \frac{1}{n!} \mu_n(x,\ldots,x) = 0.
 \]
The linearization $\mathbb{T}(B\fg)$, which is the space of fields of a field theory in the BV formalism, is canonically identified with~$\fg[1]$.
Correspondingly, the observables of such a field theory, which are \emph{functions} on the space of fields $\fg[1]$ equipped with the BV differential, are identified with the Chevalley--Eilenberg cochains $C^\bu(\fg)$. One is free to think of $C^\bu(\fg)$ as presenting a graded manifold equipped with a homological vector field, as one prefers.
While both conventions may appear in the sequel, we will try and consistently use the notation and terminology established here to disambiguate.
\end{rmk}

We also need the notion of a vector bundle on a local moduli problem. 
Given a local Lie algebra $\cL$ on~$M$, such a vector bundle consists of the data of a locally semifree sheaf $\ms V$ of dg $\cO$-modules on~$M$, together with a collection of polydifferential operators 
\[
\rho^{(n)}\colon \cL^{\otimes n} \otimes \ms V \to \ms V 
\] 
for $n \geq 1$, which (together with the differential) make $\ms V$ into a sheaf of $L_\infty$-modules for $\cL$. Over any open set $U$, the Chevalley--Eilenberg differential with coefficients in this module makes $C^\bu(\cL,\ms V)$ into a semifree dg module over $C^\bu(\cL)$, which we think of as the sections of the bundle over the moduli problem on~$U$.
Examples include the tangent and cotangent bundles of a local moduli problem, which are presented by 
taking $\ms V = \cL[1]$ or $\ms V = \cL^![-1]$, respectively. We refer the reader to~\cite[\S2]{ButsonYoo} for further details.

Any symmetry of a field theory can be presented by a local Lie algebra $\cG$. The corresponding coupling lets us view the theory as a family over the sheaf of formal spaces $B\cG$, which one interprets as a ``coupling to background fields.''  Moreover, any perturbative field theory in the Batalin--Vilkovisky formalism includes the datum of a local Lie algebra $\cL$; the moduli space of on-shell field configurations is then $B\cL$, and the space of BV fields is $\cL[1]$. In full, perturbative (Lagrangian) field theories are described by certain special local Lie algebras equipped with one additional structure, to which we turn now.

\subsection{Shifted symplectic and contact structures}
The formal study of shifted symplectic structures in the context of derived algebraic geometry originated in~\cite{PTVV}, but the first example of such a structure was the \emph{antibracket} appearing in the Batalin--Vilkovisky formalism~\cite{BV1,BV2}, and there is a large body of prior work in the physics literature that we cannot hope to fully cite here. 

Although the notion of shifted symplectic structure is model-independent, in the physics context one typically works with (a local version of) a \emph{strict} shifted symplectic structure on the space of fields $\cL[1]$. 
Such a structure is defined by a graded-alternating isomorphism
\[
\omega\colon \cL[1] \to (\cL[1])^![-1] \cong \cL^![-2]
\]
of sheaves of dg $\cO$-modules on~$M$. 
This is all we will need here, and so we do not review this notion in full. It is important to note, though, that many weakenings of this notion are possible. $\omega$ may be a quasi-isomorphism that is not invertible at the cochain level; it may be a differential operator; or it may be replaced by a more degenerate object such as a shifted presymplectic or Poisson structure. 

A notion of shifted contact structure was defined by Grabowski in~\cite{Grabowski}, which also studies a notion of \emph{contact Courant algebroid}. Grabowski proves that his notion, which arises from a generalization of the equivalence between Courant algebroid structures on $E \to M$ and Hamiltonian homological vector fields on a $2$-shifted symplectic dg manifold constructed from~$E$ to the contact setting, reduces to a generalization of the usual Courant algebroid axioms in which the pairing is valued in a line bundle on~$M$. This structure was previously studied by~\cite{Baraglia}, who called it a ``conformal Courant algebroid.''
We expect that our construction generalizes straightforwardly to this setting, but do not pursue this explicitly here.

More recently, shifted contact structures were studied in the context of derived algebraic geometry in~\cite{Berktav1,Berktav2,Berktav3},
and in smooth derived geometry in~\cite{MTV,MaglioThesis}.
Such a structure on a derived stack $X$ can be presented in many ways, but can be thought of as a map from the tangent complex of~$X$ to an $m$-shifted line bundle on~$X$, satisfying an appropriate nondegeneracy condition. 
In the sequel, we will think of our construction as closely related to a version of these constructions, but again set up in a fashion that is local over some smooth spacetime $M$. We emphasize, though, that our results are independent of this interpretation, and that we do not develop any foundational theory for such local contact structures in this paper.

\section{Courant algebroids}
\label{sec:three}

\subsection{Courant algebroids on dg ringed manifolds}
  We start by introducing the central structure of our interest.
  First, recall that there is a natural map $\ud\colon \cO \to (\Der\cO)^\vee$, where $\Der \cO$ denotes the left $\cO$-module of derivations $\cO$.\footnote{The assumptions we have made on~$\cO$ above make $\Der \cO$ into a model for the tangent complex in the examples we consider, so there are no derived subtleties to worry about here.} This can be defined by requiring the relation
  \[
      i_X\ud f = (-1)^{|X|} Xf 
  \]
  for all $f \in \cO$ and $X \in \Der\cO$, 
 in keeping with the conventions of~\cite{Kupka:2025ulg}. The map $\ud$ extends as a derivation to the exterior algebra of $(\Der\mc O)^\vee$ over $\mc O$.
  \begin{dfn}
  \label{dfn:CA}
    Let $(M,\mc O)$ be a dg ringed manifold. A (strict) \emph{Courant algebroid over~$\cO$} is given by the following data:
    \begin{itemize}
      \item a sheaf $\mc E$ of dg $\mc O$-modules (with $\cE^\#$ locally free of finite rank over~$\Omega^0$);
      \item an inner product $\la\slot,\slot\ra$, denoted $\eta$ in its guise as a self-dual isomorphism $\cE \to \cE^\vee$;
      \item an $\cO$-linear map $\rho\colon \mc E\to \on{Der}\mc O$, called the \emph{anchor map};
      \item a map of vector spaces $[\slot,\slot]\colon \mc E\otimes_{\field}\mc E\to\mc E$.
    \end{itemize}
    These data are required to satisfy the following properties for all homogeneous $u,v,w\in\mc E$, $f\in\mc O$:
    \begin{enumerate}
            \item ``Leibniz:'' \label{it:Leibniz}
              \[ [u,fv] = (-1)^{|f||u|}f [u,v] + (-1)^{u}(\rho(u)f) v. \]
                \item ``Failed antisymmetry:'' \label{it:failed-anti-symmetry}
                  \[ [u,v] + (-1)^{|v||u|}[v,u] = \cD \la u,v \ra, \]
                  where $\cD\colon \Omega^0 \to \cE$ is the map given by the composition $\eta^{-1} \rho^\vee \ud$, implying $\inner{u, \cD f} = \rho(u)f$.
                \item ``Pairing compatibility:'' \label{it:pairing-compatibility}
                  \[ \rho(u)  \la v,w \ra = \la [u,v], w \ra + (-1)^{|u||v|}\la v, [u,w] \ra. \]
                \item ``Jacobi:'' \label{it:Jacobi}
                  \[ [u,[v,w]]=[[u,v],w]+(-1)^{|u||v|}[v,[u,w]].\]
  \end{enumerate}
That is, the adjoint action of an element of $\cE$ by bracketing on the left is itself a derivation of the bracket.
  \end{dfn}

  We note that our definition is a small variant of~\cite[Definition 2.8]{RoytenbergCDA}, where a notion of Courant algebroid on a ringed space is defined. In Roytenberg's definition, one studies a sheaf of Courant--Dorfman algebras over~$\cO$, requiring neither local freeness of $\cE$ nor nondegeneracy of~$\eta$. Here, we allow $\mc O$ to be a dg ring and $\cE$ a dg module, but impose nondegeneracy  of~$\eta$ and local semifreeness. In the broader literature, ``Courant algebroid'' often refers to a vector bundle rather than to its sheaf of sections, but the Serre--Swan theorem makes this distinction immaterial~\cite{Swan}, and Roytenberg's definition is more general in any case.

\begin{rmk}
\label{rmk:strict}
In this paper, we work only with a strict notion of Courant algebroid, and with explicit models. We do not give any obviously model-independent definition of a Courant algebroid structure here. Nevertheless, the primary objects of study are various local $L_\infty$ algebras associated to such Courant algebroids, for which the necessary foundations are well-established. We thus allow ourselves to implicitly use equivalences of local $L_\infty$ algebras; for example, when $M$ is a complex manifold, we will not distinguish between $\Der \Omega^{0,\bu}(M)$ and $\Omega^{0,\bu}(M,T^{1,0}M)$. It is also reasonable to expect that the logic could be reversed, taking the perspective that formal moduli problems equipped with appropriate shifted symplectic or shifted contact structures can be used to \emph{define} a model-independent notion of Courant algebroid. 
\end{rmk}

There is an obvious parallel notion of a Lie algebroid over~$\cO$, to which we will make reference freely in the sequel.

\begin{eg}\label{dfn:standard}
  For any dg ringed manifold $(M,\mc O)$, there is a \emph{standard Courant algebroid}, defined by taking 
  \[\mc E\coloneqq \Der\mc O\oplus(\Der\mc O)^\vee.\]
  We equip this with the structure of a Courant algebroid, defined by the following formulas on homogeneous elements, which we denote by $x + \alpha$ (with $|x| = |\alpha|$) or similar. The pairing and the anchor are given by
  \[ \la x+\alpha,y+\beta\ra=(-1)^{|x||y|}\alpha(y)+\beta(x),\qquad \rho(x+\alpha)=(-1)^{|x|}x,\]
and the bracket is given by
  \[[x+\alpha,y+\beta]=L^{\mc O}_xy+L^{\mc O}_x\beta-(-1)^{(|x|+1)|y|}i_y\ud\alpha.\]
  Here $L^{\mc O}_xy$ is the commutator of derivations, and $L^{\mc O}_x\beta$ is defined by requiring the pairing between $\Der\mc O$ and $(\Der\mc O)^\vee$ to be invariant.
\end{eg}

 \begin{dfn}
  Given a Courant algebroid $\mc E$ over $\mc O$ and $u\in \mc E$, we define the \emph{generalized Lie derivative} $\ms L_u$ by its action on sections of $\mc O$ or $\mc E$, given by 
  \[\ms L_uf=\rho(u)f,\qquad \ms L_uv=[u,v],\qquad \forall f\in \mc O,\quad v\in\mc E.\]
  This naturally extends as a derivation to tensor products and (shriek) duals, in the latter case by requiring
  \[\int_Mf\ms L_u\mu=-(-1)^{|f|\cdot|u|}\int_M(\ms L_uf)\mu,\]
  for all $\mu\in \mc O^!$ and compactly supported $f\in \mc O$.
 \end{dfn}

\begin{rmk}
  In the remainder of the paper, we will only consider $\cO$-Courant algebroids whose generators are concentrated in degree zero. As such, and for the sake of readability, we will henceforth always write formulas as if all objects appearing in them were even. The generalization to arbitrary parity is straightforward.
\end{rmk}

\begin{lem}
\label{lem:Leibmap}
Let $\cE$ be a Courant algebroid over $\mc O$. Then $\rho$ preserves the brackets, i.e.
\[\rho([u,v])=[\rho(u),\rho(v)],\qquad \forall u,v\in\mc E.\]
Furthermore, the following is a (self-dual) complex of $\cO$-modules:
    \begin{equation}\label{ses}
        \begin{tikzcd}
            0 \arrow[r] & \Der\cO^\vee \arrow[r, "\eta^{-1} \rho^\vee"] & \cE \arrow[r, "\rho"] & \Der\cO \arrow[r] & 0.
        \end{tikzcd}
    \end{equation}
\end{lem}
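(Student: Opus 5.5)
The plan is the standard argument from the classical theory of Courant algebroids, adapted to the $\cO$-linear setting; all objects are treated as even, per the preceding remark.

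\emph{Anchor preserves brackets.} Apply the Leibniz rule \eqref{it:Leibniz} inside the Jacobi identity \eqref{it:Jacobi} with $w$ replaced by $fw$ for $f\in\cO$. On the left,
\[
[u,[v,fw]] = f[u,[v,w]] + (\rho(u)f)[v,w] + (\rho(v)f)[u,w] + (\rho(u)\rho(v)f)\,w .
\]
Expanding the two terms on the right-hand side the same way, the terms proportional to $f$ cancel by Jacobi for $u,v,w$. The cross terms $(\rho(u)f)[v,w]$ and $(\rho(v)f)[u,w]$ also cancel. What remains is
\[
\bigl(\rho(u)\rho(v)f - \rho([u,v])f - \rho(v)\rho(u)f\bigr)\,w = 0 .
\]
This holds for all $w$. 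Since $\cE$ is locally free and $\eta$ is nondegenerate, $\cE$ is nonzero wherever the statement has content, so I conclude $\rho([u,v])f=[\rho(u),\rho(v)]f$ for all $f$. That gives equality in $\Der\cO$.

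\emph{The composite vanishes.} For exactness at the level of a complex, I need $\rho\circ\eta^{-1}\rho^\vee=0$. I first check it on exact forms, i.e.\ show $\rho\circ\cD=0$, using failed antisymmetry together with the bracket-preserving property. Taking $u=v$ in \eqref{it:failed-anti-symmetry} gives $2[u,u]=\cD\la u,u\ra$. Applying $\rho$ to both sides and using the first part, $\rho([u,u])=[\rho(u),\rho(u)]=0$ for even $u$. Hence $\rho(\cD\la u,u\ra)=0$.

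The more robust route, which I would actually follow, uses Leibniz on both slots. Compute $[fu,v]$ via failed antisymmetry and Leibniz:
\[
[fu,v] = -[v,fu] + \cD(f\la u,v\ra) = f[u,v] - (\rho(v)f)\,u + \la u,v\ra\,\cD f .
\]
Here I use that $\cD$ is a derivation-type map, $\cD(fg)=f\cD g+g\cD f$, which follows from $\ud$ being a derivation and $\eta^{-1}\rho^\vee$ being $\cO$-linear. Now apply $\rho$ and use the first part:
\[
[f\rho(u),\rho(v)] = f[\rho(u),\rho(v)] - (\rho(v)f)\rho(u) + \la u,v\ra\,\rho(\cD f).
\]
The left side equals the first two terms by the Leibniz rule for vector fields, so $\la u,v\ra\,\rho(\cD f)=0$. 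Nondegeneracy of $\eta$ lets me choose $u,v$ locally with $\la u,v\ra=1$, giving $\rho\circ\cD=0$. Local freeness then extends this from the exact forms $\ud f$ to all of $(\Der\cO)^\vee$ by $\cO$-linearity. Local generation of $(\Der\cO)^\vee$ by exact forms is the step I expect to be the main obstacle: it needs $\ud\cO$ to generate $(\Der\cO)^\vee$ as an $\cO$-module, which is true locally, e.g.\ via $C^\infty\subset\cO$ coordinates plus generators of $\cO^\#$. Alternatively, I can avoid generation entirely by noting that $\rho\eta^{-1}\rho^\vee$ is $\cO$-linear and graded-symmetric, and its associated pairing on $(\Der\cO)^\vee$ vanishes on $\ud\cO$, hence vanishes where $\ud\cO$ spans.

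\emph{Self-duality and chain maps.} All maps are maps of dg $\cO$-modules: $\rho$ is given as such, and $\eta$ and the dual map $\rho^\vee$ are too. Injectivity/surjectivity at the ends is not claimed, only that \eqref{ses} is a complex. Self-duality is immediate: dualizing the sequence and identifying $\cE^\vee\cong\cE$ via $\eta$ sends $\rho$ to $\eta^{-1}\rho^\vee$ and conversely, using $\eta^\vee=\eta$.
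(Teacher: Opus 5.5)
Your proposal is correct and follows the paper's proof: Leibniz inside Jacobi with $fw$ gives the first claim, and applying $\rho$ to failed antisymmetry gives $\rho\circ\cD=0$, which is then factored through $\ud$. Your detour through $[fu,v]$ is valid but unnecessary, since applying $\rho$ directly to $[u,v]+[v,u]=\cD\la u,v\ra$ already yields $\rho\cD\la u,v\ra=0$, as the paper does. Your remark that the final step needs $\ud\cO$ to generate $(\Der\cO)^\vee$ locally makes explicit an assumption the paper uses tacitly; note, however, that your ``alternative'' route still relies on this generation and does not avoid it.
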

\begin{proof}
The first claim is standard and follows by repeatedly using the Leibniz property in the Jacobi identity for $u$, $v$, and $fw$. Applying $\rho$ to $[u,v]+[v,u]=\mc D\la u,v\ra$ we then obtain $\rho\circ\mc D=0$.
Since it factors through the derivation $\ud$ as
\[
\rho \circ \cD = \rho\, \eta^{-1} \rho^\vee \ud,
\]
we conclude that the linear map $\rho \,\eta^{-1} \rho^\vee$ is zero.
\end{proof}
Roytenberg calls this sequence the {tangent complex} of~$\cE$~\cite[(2.4)]{RoytenbergCDA}.
  \begin{dfn}
    If the sequence \eqref{ses} is exact, we say that the Courant algebroid is \emph{exact}. More generally, a Courant algebroid is \emph{transitive} if the anchor map $\rho$ is surjective.    
 \end{dfn}

\begin{rmk}
 Recall that a vector subspace $C \subset (E, \eta)$ is called \emph{coisotropic} if $C^\perp\subset C$. Taking $q\colon E \to E/C$ to be the quotient map, this condition can be rewritten as $\im(\eta^{-1} q^\vee) \subset \ker(q)$, i.e.\ $q\eta^{-1}q^\vee=0$. Applying the first isomorphism theorem, the second statement of Lemma \ref{lem:Leibmap} is thus equivalent to saying that 
 \[\text{$\ker\rho$ is coisotropic.}\]
\end{rmk}

  \begin{eg}\label{ex:transitive-smooth}
    If $\mc O$ is the sheaf $C^\infty$ of $\mb R$-valued smooth functions on a manifold then Definition \ref{dfn:CA} recovers the standard definition of a \emph{(smooth) Courant algebroid} \cite{liu1997manin}. In other words, $\mc E$ correspods to sections of a real vector bundle $E\to M$, further equipped with a fibrewise inner product, a vector bundle map $E\to TM$, and a bracket on $\Gamma(E)$ satisfying the axioms \eqref{it:Leibniz}--\eqref{it:Jacobi}.
    
    A simple example of a smooth Courant algebroid structure can be constructed starting with a flat principal $G$-bundle over $M$, with the real Lie algebra $\mf g=\on{Lie}(G)$ carrying a nondegenerate invariant inner product $\la\slot,\slot\ra_\mf g$. Denoting the adjoint vector bundle by $\on{ad}_G$, the Courant algebroid is given by
    \[\mc E=\Gamma(TM\oplus T^*M\oplus \on{ad}_G),\qquad \la x+\alpha+s,y+\beta+t\ra=\alpha(y)+\beta(x)+\la s,t\ra_\mf g,\qquad \rho(x+\alpha+s)=x,\]
    with the bracket
    \[[x+\alpha+s,y+\beta+t]=L_xy+(L_x\beta-i_yd\alpha+\la \nabla s,t\ra_\mf g)+(\nabla_xt-\nabla_ys+[s,t]_\mf g),\]
    where $L$ stands for the Lie derivative. In fact, every transitive Courant algebroid locally takes this form~\cite{let}. Indeed, since the result is local and principal bundles are locally trivial, it is sufficient to consider the trivial principal $G$-bundle as a local model. If~$G$ is trivial, this reduces to the \emph{standard smooth Courant algebroid} on $M$ in the sense of Example~\ref{dfn:standard}.
  \end{eg}
  \begin{eg}
    As a small modification, we can take $\field=\mb C$ and $\mc O=C^\infty_\mb C$ to be the sheaf of smooth $\mb C$-valued functions on $M$. A Courant algebroid over $\mc O$ is then described by a complex vector bundle equipped with a $\mb C$-valued pairing, a $\mb C$-bilinear map on its sheaf of sections, and an anchor map $E\to TM\otimes \mb C$, satisfying the usual axioms. 
  \end{eg}
  \begin{eg}
    If $\mc O$ is the sheaf of holomorphic functions on a complex manifold, Definition \ref{dfn:CA} recovers the definition of a \emph{holomorphic Courant algebroid} \cite{GenKaehler}, i.e.\ a holomorphic vector bundle $E\to M$ over a complex manifold, equipped with a fibrewise inner product, a holomorphic map $E\to T^{1,0}M$, and a bracket on the sheaf $\mc E$ of holomorphic sections of $E$ satisfying the axioms \eqref{it:Leibniz}--\eqref{it:Jacobi}.
    
    An example of a holomorphic Courant algebroid\footnote{This construction was introduced in \cite{Garcia-Fernandez:2018ypt} under the name \emph{string algebroids}.} can be constructed starting from a flat holomorphic principal $G$-bundle, with (the complex Lie algebra) $\mf g$ carrying an invariant inner product $\la\slot,\slot\ra_\mf g$. The sheaf $\mc E$ then consists of holomorphic sections of $T^{1,0}M\oplus T^{*1,0}M\oplus \on{ad}_G$, with $\la\slot,\slot\ra$ and $\rho$ of the same form as in Example~\ref{ex:transitive-smooth}, and with the bracket
    \[[x+\alpha+s,y+\beta+t]=L_xy+(L_x\beta -i_y\partial\alpha+\la \nabla s,t\ra_\mf g)+(\nabla_xt-\nabla_ys+[s,t]_\mf g).\]
    Again, in the case of the trivial group $G$ this is the \emph{standard holomorphic Courant algebroid}.
  \end{eg}
  
  \begin{eg}\label{ex:long}
    Let again $M$ be a complex manifold. The standard Courant algebroid over $M_{\bar\partial}$ (i.e.\ with $\mc O=(\Omega^{0,\bullet},\bar\partial)$) then has the form
    \[\mc E=(\Omega^{0,\bullet}(T^{1,0}M\oplus T^{*1,0}M),\bar\partial),\]
    where for any $\zeta,\xi,\vartheta,\kappa\in \Omega^{0,\bullet}$, $y,z\in \Gamma(T^{1,0}M)$, and $\alpha,\beta\in \Gamma(T^{*1,0}M)$ we have
    \begin{align*}
      \rho(\zeta\otimes y+\xi\otimes \alpha)\vartheta&=\zeta \wedge L_y\vartheta\\
      \la \zeta\otimes y+\xi\otimes \alpha,\vartheta\otimes z+\kappa\otimes \beta\ra&=\zeta\wedge \kappa \,\beta(y)+\xi\wedge \vartheta \,\alpha(z)\\
      [\zeta\otimes y,\vartheta\otimes z]&=\zeta\wedge\vartheta \otimes L_yz+\zeta\wedge(L_y\vartheta)\otimes z-(L_z\zeta)\wedge \vartheta\otimes y\\            
      [\zeta\otimes y,\kappa\otimes \beta]&=\zeta\wedge\kappa \otimes (\partial i_y +i_y\partial)\beta +\zeta\wedge(L_y\kappa)\otimes \beta+(\mc D\zeta)\wedge \kappa\,\beta(y)\\    
      [\xi\otimes \alpha,\vartheta\otimes z]&=\xi\wedge\vartheta \otimes i_z\partial \alpha-(L_z\xi)\wedge \vartheta\otimes\alpha+(\mc D\xi)\wedge\vartheta\,\alpha(z)\\    
      [\xi\otimes \alpha,\kappa\otimes \beta]&=0,
    \end{align*}
    where $\mc D\vartheta$ is given by $(-1)^{\deg\vartheta}\partial \vartheta$ followed by the natural identification $\Omega^{1,\bullet}\cong\Omega^{0,\bullet}\wedge\Omega^{1,0}$. Note that we have replaced $\Der \Omega^{0,\bu}$ by the quasi-isomorphic local dg Lie algebra $\Omega^{0,\bu}(T^{1,0})$, as indicated above in Remark~\ref{rmk:strict}.
  \end{eg}

 \subsection{Characterizing Courant algebroids}\label{ssec:weak}
In practice, checking the Jacobi identity is often the most difficult part of establishing a Courant algebroid structure. For this reason, it is useful---as is done in~\cite{RoytenbergCDA}---to define the notion of a \emph{almost Courant algebroid},\footnote{A more common name for this object in the literature is \emph{metric algebroid} \cite{Vaisman:2012ke}; we adopted the different name here in order to emphasize the relation of this object to a weakening of the Courant algebroid axioms.} which is the same as Definition~\ref{dfn:CA} with the condition~\eqref{it:Jacobi} dropped.

There are a pair of obstructions that measure the failure of an almost Courant algebroid to be a Courant algebroid. In this section, we establish sufficient conditions that ensure that these obstructions vanish (Lemma~\ref{lem:FV}). In the sequel, we will use this lemma in the proof of Theorem~\ref{thm:courant-extension}.

\begin{dfn}
Let $\cE$ be an almost Courant algebroid. The \emph{Jacobiator} is the map defined on triples of sections of~$\mc E$ by the formula
\[
\Jac\colon \cE \times \cE \times \cE \to \cE, \qquad (u,v,w) \mapsto [u,[v,w]]-[v,[u,w]]-[[u,v],w] . 
\] 
Similarly, there is a map $R$ (called the \emph{curvature} in~\cite{Moucka:2025kay}), defined on pairs of sections by the formula
\[
R\colon \cE \times \cE \to \Der \cO, \qquad (u,v) \mapsto \rho([u,v]) - [\rho(u), \rho(v)]. 
\]
\end{dfn}
It is clear that $R$ measures the failure of the anchor map to be a morphism of Leibniz algebras, whereas the Jacobiator measures the failure of the Jacobi identity: it vanishes precisely when property~\eqref{it:Jacobi} holds.
  
\begin{lem}
\label{lem:FV}
Let $\mc E$ be an almost Courant algebroid.
\begin{enumerate}
\item Suppose $\ker(\rho) \subset \mc E$ is coisotropic.
\label{it:FV1}
Then $R$ is antisymmetric and $\cO$-bilinear.
\end{enumerate}
Suppose that $R = 0$. Then the following statements also hold:
\begin{enumerate}[resume]
\item 
For any $f\in \cO$ and $u\in\mc E$ we have $[u, \cD f ] = \cD \rho(u) f$.
\label{it:FV2}
\item For any $f \in \cO$, the map $[\cD f, \cdot]$ is the zero map.
\label{it:FV3}
\item $\Jac$ is antisymmetric and $\cO$-trilinear.
\label{it:FV4}
\end{enumerate}
\begin{proof}
We prove the statements in turn. 
\begin{enumerate}
\item
We note that $R$ is $\cO$-linear in the second argument:
\[
R(u,fv) - fR(u,v)
= \rho([u,fv]) - [\rho(u), f\rho(v)] \\
= \rho(\rho(u)(f) \cdot v) - \rho(u)(f) \cdot \rho(v) \\
= 0.
\]
Symmetrizing $R$, we see that
\[
R(u,v) + R(v,u) = \rho( [u,v] + [v,u] ) = \rho (  \cD \langle u,v \rangle )=\rho\,\eta^{-1} \rho^\vee \ud\la u,v\ra=0,
\]
establishing the $\cO$-linearity of $R$ also in its first argument.
\item
We observe that 
\[
\langle [u, \cD f ], v \rangle = \rho(u) \langle \cD f, v \rangle - \langle \cD f, [u,v] \rangle 
= \rho(u) \rho(v) f - \rho([u,v]) f 
= \rho(v) \rho(u) f
= \langle \cD \rho(u) f , v\rangle.
\]
The first equality is pairing compatibility, the third is the vanishing of $R(u,v)$, and the second and fourth follow from the definition of~$\cD$.
\item To see this, we observe, using property~\eqref{it:failed-anti-symmetry} and the second statement of the lemma, that
\[
[\cD f, u] = -[u, \cD f] + \cD \langle \cD f, u \rangle = - \cD \rho(u) f + \cD \rho(u) f = 0.
\]
\item From the proof of Lemma~\ref{lem:Leibmap}, we see that the vanishing of~$R$ implies the linearity of~$\Jac$ in the third argument. So it suffices to show antisymmetry. On the first two arguments, we observe that
\[
\Jac(u,v,w) + \Jac(v,u,w) = \left[ \left( [u,v] + [v,u] \right), w \right] = [\cD \langle u, v \rangle, w ],
\]
which vanishes by the third statement of the lemma. Finally, on the second and third arguments, the result follows from a computation:
\[
\begin{aligned}
\Jac(u,v,w) &= [u,\cD\langle v,w \rangle] - [u,[w,v]] - \cD \langle [u,v]  ,w\rangle  + [w,[u,v]] - \cD \langle v, [u,w] \rangle + [[u,w],v] \\
&= \cD \rho(u)\langle v,w\rangle - \cD \langle [u,v]  ,w\rangle - \cD \langle [u,w]  ,v\rangle - \Jac(u,w,v) \\
&= - \Jac(u,w,v).
\end{aligned}
\]
The first equality 
 uses property~\eqref{it:failed-anti-symmetry}, 
 second equality uses the second statement of the lemma, and the third uses pairing compatibility (property~\eqref{it:pairing-compatibility}).\qedhere
\end{enumerate}
\end{proof}
\end{lem}
\begin{cor}\label{cor:weak-ca}
  A Courant algebroid is precisely an almost Courant algebroid for which $\ker\rho$ is coisotro\-pic and both $R$ and $\Jac$ vanish on some local basis of sections of~$\mc E$.
\end{cor}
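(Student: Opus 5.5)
Both directions follow from Lemma~\ref{lem:FV} and Lemma~\ref{lem:Leibmap}; the only real work is passing from a local basis to all sections.

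\emph{Only if.} If $\cE$ is a Courant algebroid, then $\Jac$ vanishes identically by property~\eqref{it:Jacobi}. Lemma~\ref{lem:Leibmap} gives $\rho([u,v])=[\rho(u),\rho(v)]$, so $R\equiv 0$. Its second statement says $\rho\,\eta^{-1}\rho^\vee=0$, which by the remark following it is equivalent to $\ker\rho$ being coisotropic. In particular $R$ and $\Jac$ vanish on every local basis.

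\emph{If.} Start from an almost Courant algebroid with $\ker\rho$ coisotropic. By Lemma~\ref{lem:FV}\eqref{it:FV1}, $R$ is $\cO$-bilinear. Since $\cE^\#$ is locally free, any local sections $u,v$ can be written over a small open set $U$ as $u=\sum_i f^i e_i$ and $v=\sum_j g^j e_j$, with $f^i,g^j\in\cO(U)$ and $\{e_i\}$ the given local basis. Bilinearity then gives
\[R(u,v)=\sum_{i,j} f^i g^j R(e_i,e_j)=0\]
on $U$. These opens cover $M$ and $R$ is a map of sheaves, being built from local operations, so $R=0$ globally.

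Now the hypothesis $R=0$ of the remaining parts of Lemma~\ref{lem:FV} holds. Part~\eqref{it:FV4} then says $\Jac$ is $\cO$-trilinear, so the same expansion in the basis shows that $\Jac$ vanishes locally, hence globally. This is exactly property~\eqref{it:Jacobi}, so $\cE$ is a Courant algebroid.

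\emph{Expected obstacle.} There is little beyond bookkeeping; the point needing care is the restriction of $R$ and $\Jac$ to open sets. This restriction requires that the bracket be local, i.e.\ that it restrict to subsets. I would justify it from the Leibniz rule together with failed antisymmetry, which give a bidifferential-operator form of the bracket. In the sheaf-theoretic setting of Definition~\ref{dfn:CA} it is in any case implicit. Since we write formulas as if all objects were even, no Koszul signs need tracking in the $\cO$-linear expansions.
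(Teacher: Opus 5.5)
Your proof is correct and follows the route the paper intends. The paper states the corollary without a separate proof, as an immediate consequence of Lemma~\ref{lem:FV}: coisotropy makes $R$ $\mathcal{O}$-bilinear, and once $R=0$ the Jacobiator is $\mathcal{O}$-trilinear, so vanishing on a local basis propagates to all sections, while Lemma~\ref{lem:Leibmap} supplies the converse. Your remarks on locality of the bracket and on the basis expansion only spell out what the paper leaves implicit.
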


\section{Two constructions}\label{sec:two-constructions}

\subsection{Reduction}\label{subsec:reduction}
Symplectic reduction has an analogue in the setting of Courant algebroids \cite{let}.
A notion of holomorphic reduction was studied in~\cite[Appendix~A]{GenKaehler}. We review the procedure here, setting it up in our language and at our level of generality.

Let $\mc E$ be a Courant algebroid over $\mc O$ and $i\colon \mc L\subset\mc E$ a sheaf of (locally semifree) involutive isotropic submodules.
The last two conditions mean that \[[\mc L,\mc L]\subset\mc L,\qquad \la \mc L,\mc L\ra=0.\]
It follows from isotropy that $\mc L$ is naturally a Lie algebroid over $\mc O$. Since
\[\la [\mc L,\mc L^\perp],\mc L\ra= \la \mc L^\perp,[\mc L,\mc L]\ra\subset \la \mc L^\perp,\mc L\ra=0,\]
we have $[\mc L,\mc L^\perp]\subset \mc L^\perp$, and 
the bracket descends to define a flat $\mc L$-connection on the sheaf $\mc F\coloneqq\mc L^\perp/\mc L$, given by the formula
\[
\nabla_l \pi(u) \coloneqq \pi ( [l,u] )  
\]
for $l\in\mc L$ and $u \in \cL^\perp$, with $\pi$ the natural quotient map.
We can then define the sheaves
\[\mc O^\mc L\coloneqq\{f\in \mc O\mid\rho(\mc L)f=0\},\qquad \mc F^\mc L\coloneqq\text{$\mc L$-flat sections of $\mc L^\perp/\mc L$},\]
and the latter is a sheaf of modules over the former.
\begin{thm}[Reduction of $\mc O$-Courant algebroids]\label{thm:reduction}
 Suppose $\mc F^\mc L\otimes_{\mc O^\mc L}\mc O \to \mc F$ is locally an isomorphism.
  % of $\cO$-modules. 
 Then the $\mc O$-Courant algebroid structure on $\mc E$ induces an $\mc O^\mc L$-Courant algebroid structure on $\mc F^\mc L$.
\end{thm}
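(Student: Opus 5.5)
We define the structure on $\mc F^\mc L$ by descending the operations from $\mc L^\perp$ and then check the axioms of Definition~\ref{dfn:CA}. We work locally throughout, where the hypothesis lets us identify $\mc F\cong\mc F^\mc L\otimes_{\mc O^\mc L}\mc O$. Given flat sections $a,b\in\mc F^\mc L$, choose lifts $u,v\in\mc L^\perp$. We then set
\[
\la a,b\ra\coloneqq\la u,v\ra,\qquad [a,b]\coloneqq\pi([u,v]),\qquad \rho^\mc L(a)f\coloneqq\rho(u)f \quad (f\in\mc O^\mc L).
\]

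\textbf{Step 1: well-definedness.} The pairing is independent of lifts because $\la\mc L,\mc L^\perp\ra=0$. It lands in $\mc O^\mc L$: for $l\in\mc L$, pairing compatibility gives $\rho(l)\la u,v\ra=\la[l,u],v\ra+\la u,[l,v]\ra$. Flatness means $[l,u],[l,v]\in\mc L$, so both terms vanish.

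For the bracket, changing $v$ by $l\in\mc L$ changes $[u,v]$ by $[u,l]=-[l,u]+\cD\la u,l\ra=-[l,u]$, which lies in $\mc L$ by flatness. Changing $u$ by $l$ changes $[u,v]$ by $[l,v]\in\mc L$, again by flatness.

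The bracket preserves $\mc L^\perp$: $\la[u,v],l\ra=\rho(u)\la v,l\ra-\la v,[u,l]\ra=0$, since $[u,l]\in\mc L$. Its class is flat, because $[l,[u,v]]=[[l,u],v]+[u,[l,v]]$ by Jacobi. The first term is the bracket of an element of $\mc L$ with an element of $\mc L^\perp$, hence lies in $\mc L$ by flatness of $v$. The second is $[u,l']$ with $l'\in\mc L$, which is again in $\mc L$ by the antisymmetry computation above.

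For the anchor, $\rho(l)f=0$ for $f\in\mc O^\mc L$, so $\rho^\mc L$ does not depend on the lift. Moreover $\rho(u)f\in\mc O^\mc L$, since $\rho(l)\rho(u)f=\rho([l,u])f+\rho(u)\rho(l)f$ by Lemma~\ref{lem:Leibmap}, and $[l,u]\in\mc L$. The same lemma shows that $\rho^\mc L(a)$ is a derivation of $\mc O^\mc L$, and it is $\mc O^\mc L$-linear in $a$.

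\textbf{Step 2: axioms.} Leibniz, pairing compatibility and Jacobi are inherited directly from $\mc E$ by applying $\pi$.

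Failed antisymmetry needs care, since $\cD$ must be replaced by $\cD^\mc L=\eta_\mc F^{-1}(\rho^\mc L)^\vee d_{\mc O^\mc L}$. For $f\in\mc O^\mc L$ we have $\la\cD f,l\ra=\rho(l)f=0$, so $\cD f\in\mc L^\perp$. It is also flat, by Lemma~\ref{lem:FV}\eqref{it:FV2}: $[l,\cD f]=\cD\rho(l)f=0$. For any lift $u$ of $a$ we then get $\la\pi\cD f,a\ra=\rho(u)f=\rho^\mc L(a)f$, which identifies $\pi\circ\cD$ with $\cD^\mc L$.

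\textbf{Step 3: module structure and nondegeneracy.} $\mc F^\mc L$ is an $\mc O^\mc L$-module, and we need it to be locally semifree with a nondegenerate pairing. This is where the hypothesis is used, and it is the main obstacle.

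The quotient $\mc L^\perp/\mc L$ carries a nondegenerate $\mc O$-valued pairing, because $\mc L$ is isotropic and locally a direct summand. Locally we pick a basis $\{e_i\}$ of $\mc F^\mc L$ over $\mc O^\mc L$; by the isomorphism $\mc F^\mc L\otimes_{\mc O^\mc L}\mc O\cong\mc F$ this is also a basis of $\mc F$ over $\mc O$. The Gram matrix $\la e_i,e_j\ra$ has entries in $\mc O^\mc L$ and is invertible over $\mc O$.

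It remains to show its inverse has entries in $\mc O^\mc L$. Since $\rho(l)$ is a derivation annihilating the matrix entries, it annihilates the entries of the inverse as well. Hence the pairing on $\mc F^\mc L$ is nondegenerate over $\mc O^\mc L$.

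We expect the finiteness and semifreeness conditions (local freeness over $\Omega^0$-type rings) to be the most delicate point. We would treat them as part of what the hypothesis provides locally, in the spirit of Remark~\ref{rmk:homdeg}.
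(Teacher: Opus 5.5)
Your proposal is correct and has the same skeleton as the paper's proof. You descend the anchor, pairing and bracket from flat representatives in $\mc L^\perp$, check independence of lifts and flatness of $[u,v]$ via Jacobi, and use the hypothesis only for nondegeneracy. You are more explicit than the paper where it says ``straightforward''. In particular, you check that $\la u,v\ra$ and $\rho(u)f$ lie in $\mc O^{\mc L}$. You also identify $\cD^{\mc L}$ with $\pi\circ\cD|_{\mc O^{\mc L}}$ via part~(2) of Lemma~\ref{lem:FV}, which is exactly what failed antisymmetry on $\mc F^{\mc L}$ requires.

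The genuine difference is Step~3.

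For $\mc F$, the paper proves $(\mc L^\perp)^\perp=\mc L$ by dualizing the defining sequence, and gets $\mc F^\vee\cong\mc F$ as a connecting map. You quote this as the standard consequence of $\mc L$ being an isotropic local direct summand. The two rest on comparable implicit assumptions.

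For $\mc F^{\mc L}$, the paper shows only $\ker\eta_{\mc L}=0$: a section orthogonal to $\mc F^{\mc L}$ is orthogonal to all of $\mc F$, since flat sections generate. It then appeals to self-duality for the cokernel. Your argument is that $\rho(l)$ kills the Gram matrix and hence its inverse. This gives invertibility over $\mc O^{\mc L}$ outright, which is the stronger conclusion, since over a ring an injective self-dual map need not be surjective. Its cost is that it needs a local $\mc O^{\mc L}$-basis of $\mc F^{\mc L}$, which the hypothesis does not obviously provide.

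You can avoid the basis as follows. Extend $\phi\colon\mc F^{\mc L}\to\mc O^{\mc L}$ $\mc O$-linearly to $\mc F\cong\mc F^{\mc L}\otimes_{\mc O^{\mc L}}\mc O$, and write the extension as $\la w,\cdot\ra$ with $w\in\mc F$. Then $\la\nabla_l w,a\ra=\rho(l)\phi(a)-\la w,\nabla_l a\ra=0$ for every flat $a$. Since flat sections generate $\mc F$, this gives $\nabla_l w=0$, so $w\in\mc F^{\mc L}$ represents $\phi$.

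The local semifreeness of $\mc F^{\mc L}$ that you flag at the end is not addressed by the paper either.
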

\begin{proof}
Clearly, the anchor map on $\mc E$ descends to a well-defined map $\mc F^\mc L\to\Der\mc O^\mc L$. Next, we observe that the pairing $\eta$ descends to a nondegenerate pairing on $\cF$, via the following argument.
Consider the defining short exact sequence of $\cO$-modules
\[
\begin{tikzcd}
0 \ar[r]  & \cL^\perp \ar[r, "j"] &  \cE \ar[r, "i^*\circ\eta"] & \cL^\vee \ar[r] & 0.
\end{tikzcd}
\]
Dualizing this sequence and using the self-duality of $\eta$ gives us a sequence
\[
\begin{tikzcd}
0 \ar[r]  & \cL \ar[r, "i"] &  \cE \ar[r, "j^*\circ\eta"] & (\cL^\perp)^\vee \ar[r] & 0,
\end{tikzcd}
\]
which is identified with the defining sequence of~$(\cL^\perp)^\perp$. Thus $\cL = (\cL^\perp)^\perp$. (Note that we have used that $(\cL^\vee)^\vee = \cL$, which is where we use the assumptions on~$\mc L$; the result would hold for $\cL$ finitely generated and reflexive.)

By isotropy, $\cL \subset \cL^\perp$. 
We thus obtain a short exact sequence of two-term cochain complexes:
\[
\begin{tikzcd}
0 \ar[r]  & \cL \ar[r, "i"] \ar[d]  &  \cE \ar[r, "j^*\circ\eta"] \ar[d, "\cong"]  & (\cL^\perp)^\vee \ar[d]   \ar[r] & 0 \\
0 \ar[r]  & \cL^\perp \ar[r, "j"] &  \cE \ar[r, "i^*\circ\eta"] & \cL^\vee \ar[r] & 0.
\end{tikzcd}
\]
This gives rise to a long exact sequence in cohomology whose first connecting map is a map from $\cF^\vee$ to~$\cF$, necessarily an isomorphism by exactness. Thus the induced pairing on $\mc F$ is nondegenerate.
    
    By the invariance of the inner product this descends to a map (which we will also decorate with a subscript $\mc L$)
    \[\la\slot,\slot\ra_\mc L\colon\mc F^\mc L\otimes_{\mc O^\mc L}\mc F^\mc L\to\mc O^\mc L.\]
    Next, we note that by assumption we have that for any $u\in \mc F^\mc L$
    \[\la u,\mc F^\mc L\ra_\mc L=0\quad\implies \quad \la u,\mc F\ra_\mc L=0 \quad\implies\quad u=0,\]
    showing $\ker\eta_\mc L=0$, and by self-duality also $\coker\eta_\mc L=0$, implying the non-degeneracy on $\mc F^\mc L$.
    
    Take $u,v\in\mc L^\perp$ such that their image in $\mc F$ is flat, i.e.\ $[\mc L,u]\subset\mc L$ and $[\mc L,v]\subset\mc L$.
    Then
    $\la [u,v],\mc L\ra=\la v,[u,\mc L]\ra=\la v,[\mc L,u]\ra=0$ and hence $[u,v]\in\mc L^\perp$.
    Furthermore,
    \[[\mc L,[u,v]]\subset[[\mc L,u],v]+[u,[\mc L,v]]\subset [\mc L,v]+[u,\mc L]\subset \mc L,\]
    implying the image of $[u,v]$ in $\mc F$ is flat. Finally, we need to check that the result does not depend on the particular representatives $u,v\in\mc L^\perp$, i.e.\ we need to check that $[u,\mc L],[\mc L,v]\subset\mc L$, which is again immediate. Thus the bracket on $\mc F^\mc L$ is well-defined. Checking the Courant algebroid axioms is then straightforward.
  \end{proof}
  
    \begin{eg} \label{ex:holo}
  Let $\mc E=\Gamma(TM \oplus T^*M)$ be the standard smooth Courant algebroid over a complex manifold $M$, with $\mc L=\Gamma(T^{0,1}M)$. Then $\mc O^\mc L=\mc C_{\hol}$ is the sheaf of holomorphic functions and $\mc F^\mc L$ is the standard holomorphic Courant algebroid.
  \end{eg}
  \begin{eg}
    Suppose $\mc E$ is a Courant algebroid over $\mc O$ and $\mc L$ is involutive and Lagrangian, i.e.\ $\mc L^\perp=\mc L$. Then $\mc O^\mc L$ is the sheaf of $\rho(\mc L)$-invariant functions, and $\mc F^\mc L=0$. For instance, we can take the standard smooth Courant algebroid $\mc E=\Gamma(TM\oplus T^*M)$ over a smooth manifold $M$, with $\mc L=\Gamma(TM)$ leading to $\mc O^\mc L=\underline{\mb R}$, or $\mc L=\Gamma(T^*M)$ leading to $\mc O^\mc L =C^\infty$.
  \end{eg}

\subsection{Extension of scalars}
Let $i\colon \cO \to \cO'$ be a map of (sheaves of) cdgas. Recall that we have 
a corresponding forgetful functor,
\[
\Res_i\colon \OPrimeMod \to \OMod, 
\]
called \emph{restriction of scalars}. The functor has a left adjoint, 
\[
\Ind_i\colon \OMod \to \OPrimeMod,
\] called \emph{extension of scalars;} the adjunction is witnessed by the fact that 
\deq[eq:adjunction]{
\Hom_\cO(X, \Res_i Y) \cong \Hom_{\cO'}(\Ind_i X, Y)
}
for any $\cO$-module $X$ and any $\cO'$-module $Y$.

Since Courant algebroids are defined by differential operators, there is no \emph{a priori} way to apply extension of scalars to a Courant algebroid over $\cO$ and obtain a Courant algebroid over $\cO'$. In this section, we will show that choosing one additional datum (roughly, a lift of the anchor map to $\Der \cO'$ satisfying certain conditions) allows one to construct a Courant algebroid structure on
\[\cE' \coloneqq \cE \otimes_\cO \cO'.\]
This structure is the unique Courant algebroid structure over $\cO'$ extending the one on~$\cE$ and having the specified anchor map.\footnote{Our assumptions on the structure sheaf of a smooth dg ringed manifold mean that we do not have to worry about the derived tensor product here.}

Our result is essentially a common generalization of results of Gr\"utzmann~\cite{Gruetzmann} for holomorphic Courant algebroids
and of Li-Bland--Meinrenken~\cite{li2009courant}
 for extensions of the bracket on locally constant functions on~$M$ valued in a quadratic Lie algebra $\fg$ to a Courant algebroid structure on all smooth $\fg$-valued functions.
More generally, one can see them as an adaptation of the construction of the unique extension of holomorphic differential operators to the Dolbeault complex, and thus to ($\dbar$-compatible) differential operators on all smooth functions. In the holomorphic setting, there is a natural choice of lifted anchor map, as we will explain later; we first pin down the extra data required in general.

Recall that $i$ does not induce any natural map from $\Der \cO$ to $\Der \cO'$. Rather, there is a pair of maps involving the $\cO'$-module $\Der(\cO, \cO')$. We have the following maps:
Restriction along $i$ yields a map
\[
i^* \in \Hom_{\cO'}\left(\Der \cO', \Der(\cO, \cO')\right).
\]
On the other hand, we have natural maps of $\cO$-modules
\[
\begin{tikzcd}
\Der(\cO) \ar[r, "1 \otimes i"] & \Der(\cO) \otimes_\cO \cO' \ar[r, "\cong"] & \Der(\cO, \cO').
\end{tikzcd}
\]
To gain intuition for these maps, it is illustrative to imagine $\cO$ to be the sheaf of holomorphic functions on a complex manifold $M$ and $\cO'$ the sheaf of complex-valued smooth functions. Then $\Der(\cO)$ is identified with holomorphic vector fields, which are holomorphic sections of $(TM \otimes \C)/T^{0,1}M$, and $\Der(\cO')$ is identified with smooth sections of $TM \otimes \C$. The intermediate object $\Der(\cO,\cO')$ then corresponds to smooth sections of $(TM \otimes \C)/T^{0,1}M$; the natural lift identifies these with smooth sections of the holomorphic tangent bundle $T^{1,0}M$.

More generally, we capture the necessary datum with the following definition.
\begin{dfn} 
\label{dfn:hrho}
Let $\cE$ be a Courant algebroid over $\cO$ and $i\colon \cO \to \cO'$ as above. A \emph{lift of the anchor map} is a map of $\cO$-modules
\[
\hrho\colon \cE \to \Res_i \Der \cO'
\]
satisfying the following three properties:
\begin{enumerate}
\item $\hrho$ preserves brackets, in the sense that
\label{it:hro-bracket}
\[
\hrho([u,v])= \hrho(u)\hrho(v) - \hrho(v) \hrho(u);
\]
\item The extension of $\hrho$ to an $\O'$-linear map $\rho'\colon \cE' \to \Der \cO'$---using the adjunction~\eqref{eq:adjunction}---extends the original anchor map, in the sense that the diagram
\begin{equation}\label{lift-rho-cd}
\begin{tikzcd}
\cE' \ar[r, "\rho'"] \ar[d, "\Ind_i \rho"'] & \Der\cO' \ar[d,"i^*"] \\
\Der \cO \otimes_\cO \cO' \ar[r] & \Der(\cO,\cO')
\end{tikzcd}
\end{equation}
commutes.
\label{it:hrho-lift}
\item
$\ker(\rho')$ is coisotropic.
\label{it:hrho-coiso}
\end{enumerate}
\end{dfn}
\begin{thm}[Extension of scalars for Courant algebroids]
\label{thm:courant-extension}
For any Courant algebroid $\cE$ and any lift $\hrho$ of the anchor map, there exists a unique Courant algebroid structure on~$\cE'$ over $\cO'$ with corresponding anchor map $\rho'$ that restricts to the given Courant algebroid structure on~$\cE$.
\begin{proof}

Any Courant algebroid structure is uniquely determined by prescribing the pairing, anchor, and bracket on a basis of the space of sections. Since in our case the sections of $\mc E$ span $\mc E'$ over $\mc O'$, the uniqueness follows.

We define the operations on $\cE'$ by taking the anchor to be $\rho'$ and the pairing to be $\eta' = \Ind_i \eta$. 
The bracket is defined (\emph{a priori} on pairs elements of $\cE \otimes_\kk \cO'$) by the formula (cf.\ \cite[p.~49]{Gruetzmann})
\[
[u \otimes_\kk a, v \otimes_\kk b]' \coloneqq [u,v] \otimes_\cO ab + v \otimes_\cO a\hrho(u)b - u \otimes_\cO b \hrho(v)a + b \langle u, v\rangle \cD' a \in \cE'.
\]
We now check the axioms of a Courant algebroid, as well as the fact that the bracket descends to $\cE'$, in a convenient order.
\begin{itemize}[leftmargin=11pt]
\item \emph{Failed antisymmetry:} We symmetrize the bracket and compute that 
\[
\begin{aligned}
[u \otimes_\kk a, v \otimes_\kk b]' + [v \otimes_\kk b, u \otimes_\kk a]' &= \left( [u,v] + [v,u] \right) \otimes_\cO ab + b \langle u, v\rangle \cD' a + a \langle u, v\rangle \cD' b \\
&= ab  \cD(\langle u, v \rangle ) + (\cD' a) b \langle u, v\rangle  + a(\cD' b) \langle u, v\rangle  \\
&= \cD' \left( ab \langle u, v \rangle \right) = \cD' \langle u \otimes_\kk a, v \otimes_\kk b \rangle'.
\end{aligned}
\]
The second equality uses failed antisymmetry for the bracket on~$\cE$, the third uses property~\eqref{it:hrho-lift} in Definition~\ref{dfn:hrho}, and the last uses the definition of~$\eta'$.
\item \emph{Descent to~$\cE'$:} We need to show that the bracket vanishes on all elements of the form
\[
v \otimes_\kk zb - zv \otimes_\kk b,
\]
where $z \in \cO$. On the second argument, we see that
\[
\begin{aligned}
[u \otimes_\kk a, v \otimes_\kk zb]'  -  [u \otimes_\kk a, zv \otimes_\kk b]'  
&= [u,v] \otimes_\kk zab - [u, zv] \otimes_k ab + v \otimes_\cO a\hrho(u)(zb) - zv \otimes_\cO a\hrho(u)b \\
&= - v \otimes \rho(u)z \cdot ab + v \otimes \hrho(u)z \cdot ab,
\end{aligned}
\]
which vanishes because of condition~\eqref{it:hrho-lift} on~$\hrho$. Since failed antisymmetry holds and the third term in the relation clearly vanishes on the ideal, we deduce that the bracket vanishes on the ideal in the first argument as well. We thus no longer need to write $\otimes_\kk$; in the remainder of the proof, $\otimes = \otimes_\cO$ for brevity.
\item \emph{Leibniz:} We again check by computation that 
\[
[u \otimes a, v \otimes fb]'  - f [u \otimes a, v \otimes b]'  
= v \otimes ab \hrho(u)f = \rho'(u\otimes a)f \cdot (v \otimes b).
\]
\item \emph{Pairing compatibility:}
Finally, we note that
\begin{equation*}
\begin{aligned}
	\rho'(u \otimes a) \langle v\otimes b, w \otimes c \rangle
		&= abc\,  \rho(u)\langle v,w \rangle 
			+ a \langle v,w \rangle ( c\hrho(u)b  + b\hrho(u)c ) \\
		&= \langle [u, v] \otimes ab + v \otimes a\hrho(u)b , w \otimes c\rangle 
		+ \langle v, [u,w]\otimes ac + w\otimes a\hrho(u)c   \rangle \\
		&= \langle [{u\otimes a} , v\otimes b]', w\otimes c \rangle 
			+ \langle v\otimes b,  [{u\otimes a} , w\otimes c]' \rangle \\
		& \qquad+ \left\langle u \otimes b \hrho(v)a - \langle u,v \rangle b\cdot \cD' a, w \otimes c \right\rangle
			+ \left\langle v \otimes b, u \otimes c \hrho(w)a - \langle u,w \rangle c\, \cD' a \right\rangle.
\end{aligned}
\end{equation*}
We see that the first line of the last equality expresses pairing compatibility. The extra terms on the second line can be rewritten as
\begin{equation*}	 
	\langle u, v \rangle b \Big( c\hrho(w)a - \langle w \otimes c, \cD' a \rangle \Big)  
			+ \langle u, w \rangle c \Big( b\hrho(v)a - \langle v \otimes b, \cD' a \rangle\Big),
\end{equation*}	
and thus vanish after recalling the definition of~$\cD'$. 
\end{itemize}
We have thus shown that the structure on $\mc E'$ defines an almost Courant algebroid in the sense of~\S\ref{ssec:weak}.
Using~\eqref{it:hrho-coiso} and the fact that both $\Jac$ and $R$ vanish on sections of $\mc E$, which generate $\mc E'$ over $\mc O'$, Corollary~\ref{cor:weak-ca} allows us to conclude that $\mc E'$ is a Courant algebroid.
\end{proof}
\end{thm}

\begin{eg}[Courant algebroids from infinitesimal group actions~\cite{li2009courant}]
  Let $M$ be a smooth manifold, $(\mf g, \inner{\ , \ }_{\mf g})$ a quadratic Lie algebra and $\cO = \underline{\R}$ the sheaf of locally constant functions on $M$. Then $\mc E = \underline{\mf g}$ defines an $\cO$-Courant algebroid with the anchor $\rho = 0$ and the bracket and inner product on $\mc E$ given by the bracket and inner product on $\mf g$.

  Choosing $\cO' = C^\infty$ with $i$ the natural map from locally constant to smooth functions, we find that $\mc E' = \ul{\mf g} \otimes_{\underline\R} C^\infty$ consists of smooth functions valued in~$\fg$. Since $\Der C^\infty = \mf X$ is the sheaf of smooth vector fields, to specify a lift of $\rho$ is to specify a Lie map $\hrho \colon \mf g \to \mf X$, i.e.\ an action of $\mf g$ on $M$. Note that the adjunction in \eqref{eq:adjunction} is represented by the fact that we can interpret $\hrho$ either as the $\R$-linear map $\mf g \to \mf X$ above or equivalently as a $C^\infty$-linear map on smooth sections, i.e.\ a bundle map $\mf g \times M \to TM$.

  Following Theorem~\ref{thm:courant-extension}, the inner product on $\mc E'$ is then given by the extension and the bracket is
  \begin{equation*}
    \begin{split}
      [x_1 \otimes f_1, x_2 \otimes f_2]' &= [x_1, x_2]_{\mf g} \otimes f_1 f_2 + x_2 \otimes f_1 \hrho(x_1)f_2 - x_1 \otimes f_2 \hrho(x_2)f_1 + f_2 \inner{x_1, x_2}_{\mf g} \mc D' f_1\\
    \end{split}
  \end{equation*}
  The requirement that $\ker \rho'$ is coisotropic in $\mc E'$ then reduces to the statement that the stabilizer of any point $p \in M$, i.e.\ $\ker\hat \rho_p = \{x\in \mf g \mid \hat \rho (x)_p = 0 \}$, is coisotropic. This is precisely the condition identified in~\cite{li2009courant}.
\end{eg}

In the remainder of the paper we will be interested in a particular class of lifts $\hat \rho$, which arise from a construction involving the structure sheaves $\mc O$ and $\mc O'$ directly, without referencing any particular Courant algebroid structure.

\begin{prop}\label{prop:postcompose}
  Fix a cdga map $i\colon \cO \to \cO'$ and let $\lambda\colon \Der\mc O\to \Res_i\Der \mc O'$ be a bracket preserving map such that the following commutes:
  \begin{equation}\label{new-commutative}
  \begin{tikzcd}
    \Der\mc O \ar[r,"\lambda"] \ar[dr,"i"'] & \Res_i\Der\mc O'\ar[d,"i^*"]\\
    & \Res_i\Der(\mc O,\mc O').
  \end{tikzcd}
  \end{equation}
  Then for any Courant algebroid $\mc E$ over $\mc O$, the composition $\hat\rho\coloneqq\lambda\circ \rho$ is a lift of the anchor map (cf.\ \ref{dfn:hrho}).
\end{prop}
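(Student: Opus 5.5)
We have to verify the three conditions of Definition~\ref{dfn:hrho} for $\hat\rho=\lambda\circ\rho$. First, $\hat\rho$ is a map of $\cO$-modules because it is a composition of such maps: $\rho$ is $\cO$-linear, and $\lambda$ is a map $\Der\cO\to\Res_i\Der\cO'$. \emph{Here I assume that $\lambda$ is $\cO$-linear}, which I take to be implicit in the statement, since the diagram~\eqref{new-commutative} is a diagram of $\cO$-modules.

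\emph{Condition~\eqref{it:hro-bracket}.} Lemma~\ref{lem:Leibmap} gives $\rho([u,v])=[\rho(u),\rho(v)]$. Since $\lambda$ preserves brackets,
\[\hat\rho([u,v])=\lambda([\rho u,\rho v])=[\lambda\rho u,\lambda\rho v]=[\hat\rho(u),\hat\rho(v)].\]

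\emph{Condition~\eqref{it:hrho-lift}.} Both compositions around~\eqref{lift-rho-cd} are $\cO'$-linear maps out of $\cE'=\Ind_i\cE$. By the adjunction~\eqref{eq:adjunction}, it therefore suffices to check equality on elements $u\otimes 1$ with $u\in\cE$. Going right then down gives $i^*\hat\rho(u)=i^*\lambda(\rho(u))$. Going down then right gives the image of $\rho(u)$ under $\Der\cO\to\Der\cO\otimes_\cO\cO'\to\Der(\cO,\cO')$. Commutativity of~\eqref{new-commutative} says exactly that these agree. (The map labelled $i$ there is precisely this natural map $\Der\cO\to\Der(\cO,\cO')$.)

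\emph{Condition~\eqref{it:hrho-coiso}.} This is the step needing care. We have $\rho'=\Ind_\lambda\circ\Ind_i\rho$, where $\Ind_\lambda\colon\Der\cO\otimes_\cO\cO'\to\Der\cO'$ is the $\cO'$-linear extension of $\lambda$. Hence $\ker\rho'\supseteq\ker(\Ind_i\rho)$. Coisotropy is equivalent to $\rho'\eta'^{-1}\rho'^\vee=0$, by the remark following Lemma~\ref{lem:Leibmap}. We have
\[\rho'\eta'^{-1}\rho'^\vee=\Ind_\lambda\circ\Ind_i(\rho\eta^{-1}\rho^\vee)\circ(\Ind_\lambda)^\vee,\]
because $\eta'=\Ind_i\eta$ and dualization commutes with extension of scalars for locally semifree modules of finite rank. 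Lemma~\ref{lem:Leibmap} gives $\rho\eta^{-1}\rho^\vee=0$, so its extension of scalars vanishes, and the whole composite is zero. Thus $\ker\rho'$ is coisotropic.

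The main obstacle is the identification $(\Ind_i\rho)^\vee\cong\Ind_i(\rho^\vee)$, together with the corresponding statement for $\eta$. This is where the finite-rank, semifree hypotheses enter, just as $(\cL^\vee)^\vee=\cL$ was used in the proof of Theorem~\ref{thm:reduction}. I would check it locally on a basis, where it is immediate.
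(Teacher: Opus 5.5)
Your proposal is correct and follows essentially the same route as the paper: bracket preservation from Lemma~\ref{lem:Leibmap} composed with $\lambda$, the lift condition from factoring $\rho'$ through the $\cO'$-linear extension of $\lambda$ together with the commutativity of \eqref{new-commutative}, and coisotropy from $\rho\,\eta^{-1}\rho^\vee=0$. Your treatment of the last step is slightly more careful than the paper's one-line identity $\hat\rho\,\eta^{-1}\hat\rho^\vee=\lambda(\rho\,\eta^{-1}\rho^\vee)\lambda^\vee=0$: you work with $\rho'$ over $\cO'$ and make explicit that extension of scalars commutes with duals for finite-rank locally free modules.
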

\begin{proof}
  Since $\hat\rho$ is a composition of two bracket-preserving maps, it also preserves the brackets. 
  Next, we note that $\Ind_i\lambda$ defines a diagonal map in \eqref{lift-rho-cd} from the bottom left to the top right corner. The commutativity of the upper and lower triangle corresponds to the definition $\hat\rho\coloneqq\lambda\circ \rho$ and the commutativity of \eqref{new-commutative}, respectively. Finally, the coisotropy follows from
  \[\hat\rho \, \eta^{-1} \hat\rho^\vee=\lambda (\rho \, \eta^{-1} \rho^\vee) \lambda^\vee=0.\qedhere\]
\end{proof}

\begin{eg}[Holomorphic to smooth]
Let $\cO$ be the sheaf of holomorphic functions on a complex manifold, $\cO'$ the sheaf of complex-valued smooth functions, and $i\colon \mc O\to\mc O'$ the obvious inclusion. Taking $\lambda$ to be the natural embedding of holomorphic vector fields into smooth ones, one recovers the procedure \cite{Gruetzmann} of extending a holomorphic Courant algebroid to a smooth complex-valued one.
\end{eg}

Another interesting example is obtained by taking $i$ to be the map from holomorphic functions to the Dolbeault complex $(\Omega^{0,\bullet},\bar\partial)$. This will be examined in more detail in \S\ref{subsec:red-ext} below.

\section{The Courant contact model}
\label{sec:five}

\subsection{The Roytenberg--Weinstein local Lie algebra}
As the reader may have observed,  condition~\eqref{it:failed-anti-symmetry} in the definition of a Courant algebroid expresses the fact that the symmetrization of the bracket operation is in the image of the operator $\cD$, in a manner that is witnessed by the pairing. The failure of the bracket to be a Lie bracket on~$\cE$ therefore becomes exact in the cochain complex
\begin{equation*}
 \Cone(\cD) = \begin{tikzcd} \cO[1] \ar[r, "\cD"] & \cE. \end{tikzcd}
\end{equation*}
One is thus led to imagine using the Courant algebroid structure on~$\cE$ to define a local $L_\infty$ algebra on~$\Cone(\cD)$.
Exactly this was done by Roytenberg and Weinstein~\cite{RW}.
To do so, one antisymmetrizes the bracket, thus trading the failure of antisymmetry for a failure of the Jacobi identity. 
This failure is again exact, and can be corrected by introducing a single 3-ary higher bracket operation.

\begin{dfn}
Let $\cE$ be a Courant algebroid over~$\cO$. The \emph{Roytenberg--Weinstein local Lie algebra} of~$\cE$ is the locally free sheaf of dg $\cO$-modules
\[
\RW(\cE) \coloneqq \Cone(\cD),
\]
viewed as a local Lie algebra with the following operations:
\begin{itemize}
    \item The differential is given by the differential $\cD + \dbar$ on the dg $\O$-module.
    \item The 2-ary bracket on pairs of sections in degree 0 is the symmetrization of the Courant bracket:
    \begin{equation*}
  \mu_2(\xi_1,\xi_2) \coloneqq \tfrac12 ([\xi_1,\xi_2]-[\xi_2,\xi_1]) = [\xi_1,\xi_2] - \tfrac{1}{2} \cD \eta(\xi_1,\xi_2).
\end{equation*}
    \item The 2-ary bracket of a section $u \in \cE$ with a section $f \in \cO[1]$ is given by
    \begin{equation*}
    \mu_2(\xi,f) = \tfrac{1}{2} \langle \xi,\cD f \rangle = \tfrac{1}{2} \rho(\xi)f.
    \end{equation*}
    \item There is a single $3$-ary bracket correcting for the failure of the Jacobi identity, which is
\begin{equation*}
    \mu_3(\xi_1,\xi_2,\xi_3) = - \tfrac{1}{6}\left( \langle [\xi_1,\xi_2],\xi_3 \rangle + \langle [\xi_2,\xi_3],\xi_1 \rangle + \langle [\xi_3,\xi_1],\xi_2 \rangle \right) \in \cO[1].
\end{equation*}
\end{itemize}
Here we denote sections of~$\cE$ with the letter $\xi$ and section of~$\cO[1]$ with the letter $f$; we will maintain this notation when referring to the fields of our model below.
\end{dfn}

The anchor map on~$\cE$ becomes a map of local Lie algebras from $\RW(\cE)$ to $\Der\cO$, so that $\RW(\cE)$ is naturally a higher Lie algebroid. Indeed, the Roytenberg--Weinstein local Lie algebra can be thought of as the Atiyah algebroid of a gerbe with connective structure. (See~\cite[Letter~5]{let}, and also the recent discussion in~\cite{BunkShahbazi}.)
As recalled above in~\S\ref{ssec:locLie}, $\RW(\cE)$ presents a sheaf of formal moduli problems, and we think of the cosheaf $C^\bu(\RW(\cE))$ as functions on that sheaf of spaces.
We conclude this discussion by noting that the Chevalley--Eilenberg differential (or homological vector field) can be explictly written as 
\begin{equation}
Q_{\RW} = \int_M\left(\frac12\ms L_\xi f-\frac1{12}\la \xi,\ms L_\xi\xi\ra\right)\frac{\delta}{\delta f}+\left(\mc Df+\frac12 \ms L_\xi\xi\right)\frac{\delta}{\delta \xi}.
\label{eq:QRW}
\end{equation}

\subsection{The master action}
Suppose now that a Courant algebroid $\mc E$ over $\mc O$ is equipped with a fixed $n$-orientation (in the sense of Definition~\ref{dfn:or}), given by a section \[\lambda_0\in\mc O^![-n].\] For brevity, we say $\lambda_0$ is \emph{nondegenerate}. We then introduce the following sheaf of graded locally free $\O$-modules: 
\begin{equation}
  \ms M[1] = \O[2] \oplus \cE[1] \oplus \O^![-n]
\label{eq:M}
\end{equation}
and regard it as the space of fields of the Courant contact model, although the full differential contains additional terms; look ahead to Remark~\ref{rmk:semifree}. We will equip $\ms M$ with the structure of a local Lie algebra below. We denote the fields by $(f,\xi,\lambda')$, extending our conventions for $\RW(\cE)$. 

The field $\lambda'$ can be interpreted as a deformation of the background orientation $\lambda_0$; as such, we define $\lambda = \lambda_0 + \lambda'$, and write all formulas using the variable $\lambda$.
There is furthermore a natural action of the group $\cO^\times$ of invertible functions on~$\ms M[1]$, by rescaling the orientation $\lambda$; this action does not preserve $\lambda_0$, but it does carry nondegenerate orientations to nondegenerate orientations. 

Define the local one-form
\deq[eq:theta]{
    \vartheta \coloneqq \int \lambda \left( \delta f + \frac12 \langle \xi, \delta \xi \rangle \right)
}
on $B \ms M$. Note that this form is linear with respect to the field space coordinate $\lambda$.
Furthermore, viewing $\vartheta$ as a Liouville one-form defines an exact $(2-n)$-shifted symplectic structure on~$B \ms M$, given by the local closed two-form
\deq[eq:omega]{
    \omega \coloneqq \delta\vartheta = \int \delta \lambda \left( \delta f + \frac12 \langle \xi, \delta \xi \rangle \right) + \frac{(-1)^n}2 \lambda \langle \delta \xi, \delta \xi \rangle
} 
Note that this symplectic form is \emph{not} written in Darboux form. 

\begin{lem}
  The Hamiltonian vector field for the degree $3-n$ functional
  \[S_{\bar\partial}=\int_M\lambda\left(\bar\partial f+\frac12\la \xi,\bar\partial \xi\ra\right)\]
  is the internal differential $\dbar$.
\end{lem}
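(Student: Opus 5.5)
We verify directly that the vector field $Q_{\dbar}$ given by the internal differential $\dbar$ acting on all three fields satisfies $i_{Q_{\dbar}}\omega = -\delta S_{\dbar}$. Nondegeneracy of $\omega$ then gives $X_{S_{\dbar}} = Q_{\dbar}$. We write $Q_{\dbar}f = \dbar f$, $Q_{\dbar}\xi = \dbar\xi$ and $Q_{\dbar}\lambda = \dbar\lambda$. Since $\lambda_0$ is an $n$-orientation it is $\dbar$-closed, so $\dbar\lambda = \dbar\lambda'$ is a genuine tangent vector. Following the convention of the paper, we write all formulas as if the objects were even and restore Koszul signs at the end.

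First we compute $\delta S_{\dbar}$. Using that $\dbar$ commutes with $\delta$ and that the pairing is $\dbar$-compatible, we get
\[
\delta S_{\dbar} = \int \delta\lambda\Bigl(\dbar f + \tfrac12\la\xi,\dbar\xi\ra\Bigr) + \lambda\Bigl(\dbar\delta f + \tfrac12\la\delta\xi,\dbar\xi\ra + \tfrac12\la\xi,\dbar\delta\xi\ra\Bigr).
\]
We then integrate by parts in the terms where $\dbar$ hits a variation. The identity $\int\dbar(\cdots)=0$ holds because $\int\colon \cO^!\to\field$ is a cochain map. It moves $\dbar$ from $\delta f$ onto $\lambda$, and in the term $\lambda\la\xi,\dbar\delta\xi\ra$ it moves $\dbar$ onto $\lambda$ and onto $\xi$. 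This gives
\[
\delta S_{\dbar} = \int \delta\lambda\Bigl(\dbar f + \tfrac12\la\xi,\dbar\xi\ra\Bigr) - \dbar\lambda\Bigl(\delta f + \tfrac12\la\xi,\delta\xi\ra\Bigr) + \lambda\la\dbar\xi,\delta\xi\ra,
\]
up to signs. Here the two $\lambda$-terms quadratic in $\xi$ combine, by symmetry of $\eta$, into the single last term.

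Next we contract $Q_{\dbar}$ into the expression~\eqref{eq:omega} for $\omega$. The term $\delta\lambda\,\delta f$ contributes $\dbar\lambda\,\delta f - \delta\lambda\,\dbar f$. The term $\tfrac12\delta\lambda\la\xi,\delta\xi\ra$ contributes $\tfrac12\dbar\lambda\la\xi,\delta\xi\ra - \tfrac12\delta\lambda\la\xi,\dbar\xi\ra$. The term $\tfrac{(-1)^n}{2}\lambda\la\delta\xi,\delta\xi\ra$ contributes $(-1)^n\lambda\la\dbar\xi,\delta\xi\ra$. Comparing term by term with $\delta S_{\dbar}$ shows $i_{Q_{\dbar}}\omega = -\delta S_{\dbar}$. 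An alternative shortcut is Cartan's formula: $\dbar$ preserves $\vartheta$, so $L_{Q_{\dbar}}\vartheta = 0$, and hence $i_Q\omega = -\delta(i_Q\vartheta)$. Since $i_Q\vartheta = \int\lambda(\dbar f + \tfrac12\la\xi,\dbar\xi\ra) = S_{\dbar}$, this gives the claim immediately. We would present this version and use the explicit computation only as a check.

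The main obstacle is sign bookkeeping. We must track the parities $|\lambda| \equiv n$ and the shifted degrees of $f$ and $\xi$, the $(-1)^n$ in $\omega$, and the convention $i_{X_H}\omega=-\delta H$. We also need to confirm that $\dbar$ on $\cO^!$, defined by duality, makes $\int$ vanish on $\dbar$-exact terms with the correct sign. For the Cartan route, the key checks are that $\dbar$ acting on $\lambda\,\delta f$ is a total $\dbar$-derivative, and that $\la\xi,\delta\xi\ra$ transforms as a derivation.
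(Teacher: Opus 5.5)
Your proposal is correct, and the route you say you would present is exactly the paper's proof: because $\dbar$ preserves all the structure, $L_{\dbar}\vartheta=0$, and Cartan's formula then gives $i_{\dbar}\omega=\pm\delta(i_{\dbar}\vartheta)=-\delta S_{\dbar}$. The only difference is where the sign sits: the paper writes $i_{\dbar}\omega=\delta i_{\dbar}\vartheta$ and gets $i_{\dbar}\vartheta=-S_{\dbar}$ from the Koszul convention for the odd vector field $\dbar$, whereas you put the minus sign in Cartan's formula; the net result is the same, and your term-by-term computation is an extra check the paper does not carry out.
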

\begin{proof}
  Since $\bar\partial$ preserves all the structure involved, we have
  \[i_{{\bar\partial}}\omega=i_{{\bar\partial}}\delta\vartheta=-L_{{\bar\partial}}\vartheta+\delta i_{{\bar\partial}}\vartheta=\delta i_{{\bar\partial}}\vartheta=-\delta\int_M\lambda\left(\bar\partial f+\frac12\la\xi,\bar\partial\xi\ra\right).\qedhere\]
\end{proof}

\begin{thm}\label{thm:cme}
The degree $3-n$ action functional
$S=S_0+S_{\bar\partial}$, with
  \[S_0=\int \lambda \left( \ms L_\xi f + \frac16 \langle \xi,\ms  L_\xi \xi \rangle\right),\]
satisfies the classical master equation on $(\ms M[1], \omega)$.
\end{thm}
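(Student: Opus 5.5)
The plan is to reduce the classical master equation $\{S,S\}=0$ to the statement that the Hamiltonian vector field $Q$ of $S$ squares to zero and satisfies $Q S = 0$. Because $\omega=\delta\vartheta$ is exact and $\vartheta$ is linear in $\lambda$, I will exploit the weight grading given by the $\cO^\times$-rescaling of $\lambda$. Let $E$ denote the Euler vector field $\lambda\,\delta/\delta\lambda$. Then $\vartheta$ has weight one, so $L_E\omega=\omega$. The action $S$ is also of weight one, and in fact $S=i_Q\vartheta$ up to sign, as the proof of the preceding lemma suggests.

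The concrete route is to compute $Q=X_S$ explicitly first. I will contract $\omega$ with a general vector field
\[
V = A\,\frac{\delta}{\delta f}+B\,\frac{\delta}{\delta \xi}+C\,\frac{\delta}{\delta\lambda},
\]
where $A$ and $B$ are independent of $\lambda$ and $C$ is proportional to $\lambda$, and then solve $i_V\omega=-\delta S$. The expectation is that the $f$- and $\xi$-components reproduce the Roytenberg--Weinstein vector field $Q_{\RW}$ of \eqref{eq:QRW}, possibly up to the normalization constants in front of $\ms L_\xi f$ and $\la\xi,\ms L_\xi\xi\ra$. Including $\dbar$ through the lemma, these components should be $\dbar f+\tfrac12\ms L_\xi f-\tfrac1{12}\la\xi,\ms L_\xi\xi\ra$ and $\dbar\xi+\cD f+\tfrac12\ms L_\xi\xi$. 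The $\lambda$-component should be the coadjoint-type action, namely $\dbar\lambda$ plus $\ms L_\xi\lambda$, the latter extended to $\cO^!$ via the generalized Lie derivative. To extract $C$, I will vary $f$: the term $\int\lambda\,\ms L_\xi f$ gives $\delta\lambda$-coefficient $-\ms L_\xi\lambda$ after integrating by parts with the shriek-dual definition. This step should essentially just be careful bookkeeping.

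Next I will show $Q^2=0$. On the $(f,\xi)$ components $Q$ is $Q_{\RW}$ (plus $\dbar$). Its square vanishes because $\RW(\cE)$ is an $L_\infty$ algebra, by Roytenberg--Weinstein, with $\dbar$ compatible since all structures are $\dbar$-equivariant. On the $\lambda$ component, $Q^2\lambda=0$ amounts to the statement that $\xi\mapsto \ms L_\xi$ on $\cO^!$ is an $L_\infty$-module for $\RW(\cE)$ up to the homotopies. Since $\ms L$ acts on $\cO^!$ through the anchor $\rho$, this reduces to $\rho$ being a morphism of local Lie algebras $\RW(\cE)\to\Der\cO$ (Lemma~\ref{lem:Leibmap}), together with $\rho\circ\cD=0$, which kills the $f$-dependence.

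Finally I need $QS=0$, which I will get from the weight argument. Since $\{S,S\}=QS$ and $Q^2=0$, $QS$ is $Q$-closed. Alternatively, $QS=i_Q i_Q\omega$ up to sign. Using $S=\pm i_Q\vartheta$ and $L_Q\vartheta=\delta i_Q\vartheta+i_Q\omega=\pm\delta S-\delta S$, this gives either $L_Q\vartheta=0$ or a multiple of $\delta S$. I will then compute $QS = i_Q\delta S = \pm i_Q i_Q \omega$ and compare it with the contraction $i_Q L_Q\vartheta$ to show it vanishes. If this formal route gets tangled, I will verify $QS=0$ directly: $QS$ is a polynomial in $\xi$ and $f$, linear in $\lambda$, and it is checked by collecting the terms cubic in $\xi$ (Jacobi and pairing compatibility), quadratic in $\xi$ and linear in $f$ ($\rho$ bracket preserving), and those containing $\cD f$ ($\rho\circ\cD=0$), with $\dbar$-terms cancelling by the lemma. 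The main obstacle is the cubic-in-$\xi$ terms of $QS$, together with the sign and normalization matching between $Q$ and $Q_{\RW}$. For these I will lean on the cyclic expression for $\mu_3$ and Lemma~\ref{lem:FV}\eqref{it:FV2}.
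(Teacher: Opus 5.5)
Your proposal is correct, but it reaches the master equation by a different mechanism than the paper. Both arguments begin by solving $i_Q\omega=-\delta S$. The paper obtains exactly \eqref{eq:QBV}, so the $(f,\xi)$-components agree with $Q_{\RW}$ of \eqref{eq:QRW} on the nose; your argument genuinely needs this exact match, so there is no room for ``up to normalization''. The paper then evaluates $X_{S_0}S_0$ directly. It uses only $\langle\cD f,\cD f\rangle=0$, $\ms L_{\ms L_\xi\xi}=2\ms L_\xi\ms L_\xi$, and the identity $\langle\ms L_\xi\xi,\ms L_\xi\xi\rangle=4\langle\xi,\ms L_\xi\ms L_\xi\xi\rangle$, which is essentially your fallback. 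You instead establish $Q^2=0$ and then exploit the weight-one structure. On $(f,\xi)$ you use Roytenberg--Weinstein, whose proof is purely algebraic and so transfers to $\cO$-Courant algebroids. On $\lambda$ you use that $\rho$ preserves brackets and $\rho\circ\cD=0$. The final step needs to be stated precisely: ``$QS$ is $Q$-closed'' holds for any $S$ and yields nothing. There are two clean ways to close it. First, $[X_S,X_S]=\pm X_{\{S,S\}}$, so $Q^2=0$ forces $\delta\{S,S\}=0$. Moreover $\{S,S\}$ is linear in $\lambda$, because the $(f,\xi)$-components of $Q$ are $\lambda$-independent and its $\lambda$-component is linear. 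Hence $\delta\{S,S\}/\delta\lambda=0$ already kills it. Second, with $E=\lambda\,\delta/\delta\lambda$ one has $\vartheta=i_E\omega$ and $[Q,E]=0$, since $S$ and $\omega$ both have weight one. Then $L_Q\vartheta=L_Qi_E\omega=i_{[Q,E]}\omega=0$ (using $L_Q\omega=0$), and so $QS=\mp L_Qi_Q\vartheta=\mp i_{[Q,Q]}\vartheta=0$. Your route buys a conceptual statement: the master equation is equivalent to the $L_\infty$ relations of $\RW(\cE)$ together with the module structure on $\cO^!$, which is precisely the symplectification picture the paper describes after the theorem. The paper's computation buys self-containment. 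It does not presuppose Roytenberg--Weinstein and in fact reproves their relations, since $X_{S_0}^2=0$ restricted to $(f,\xi)$ is $Q_{\RW}^2=0$.
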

Thus $S$ defines a local Lie algebra structure on~$\ms M$. The corresponding formal moduli problem is the Courant contact model.
\begin{proof}
  Since $\bar\partial^2=0$, we have $\{S_{\bar\partial},S_{\bar\partial}\}=0$. Furthermore, since $\bar\partial$ preserves the Courant algebroid structure, we also have $\{S_0,S_{\bar\partial}\}=0$. It thus remains to check $\{S_0,S_0\}=0$.
  
  First, under the variation $\xi\to\xi+\delta\xi$ we get
  \begin{equation*}
  \begin{split}
    \delta\la \ms L_\xi\xi,\xi\ra&=\la \ms L_\xi\delta\xi+\ms L_{\delta\xi}\xi,\xi\ra+\la \ms L_\xi\xi,\delta\xi\ra=\la 2\ms L_\xi\delta\xi+\cD\la \delta\xi,\xi\ra,\xi\ra+\la \ms L_\xi\xi,\delta\xi\ra\\
    &=2\la \ms L_\xi\delta\xi,\xi\ra+\ms L_\xi\la \delta\xi,\xi\ra+\la \ms L_\xi\xi,\delta\xi\ra=3\la \ms L_\xi\delta\xi,\xi\ra,
  \end{split}
  \end{equation*}
  and under the same variation, we find
  \[\delta S_0=\int_M\lambda(\ms L_{\delta\xi}f+\tfrac12\la \ms L_\xi\delta\xi,\xi\ra)=\int_S\la\delta\xi,(-1)^n\lambda\cD\!f+\tfrac12\ms L_\xi(\lambda\xi)\ra.\]
  Thus for the variational derivatives we have
  \[\frac{\delta S_0}{\delta\xi}=(-1)^n\lambda\cD\!f+\frac12\ms L_\xi(\lambda\xi),\qquad \frac{\delta S_0}{\delta f}=(-1)^{n+1}\ms L_\xi\lambda,\qquad \frac{\delta S_0}{\delta \lambda}=\ms L_\xi f + \frac16 \langle \xi, \ms L_\xi \xi \rangle.\]
  Solving $i_{X_{S_0}}\omega=-\delta S_0$, we obtain the homological vector field
  \deq[eq:QBV]{
  X_{S_0}=\int_M\left(\frac12\ms L_\xi f-\frac1{12}\la \xi,\ms L_\xi\xi\ra\right)\frac{\delta}{\delta f}+\left(\mc Df+\frac12 \ms L_\xi\xi\right)\frac{\delta}{\delta \xi}+(\ms L_\xi\lambda) \frac{\delta}{\delta \lambda}.
  }
  Note that this consists just of the homological vector field for the Roytenberg--Weinstein local Lie algebra, as given above in~\eqref{eq:QRW}, together with a term representing the action by Lie derivative on the space of volume forms.
  
  Plugging this into the classical master equation and using
  \[\la \cD\!f,\cD\!f\ra=0,\qquad \ms L_{\ms L_\xi\xi}=2\ms L_\xi \ms L_\xi,\]
  we obtain
  \begin{align*}
    \{S_0,S_0\}&=X_{S_0}S_0=\int_M\frac12(\ms L_\xi\lambda)\ms L_\xi f-\frac1{12}(\ms L_\xi \lambda)\la \xi,\ms L_\xi\xi\ra+\frac12\la \ms L_\xi(\lambda\xi),\mc Df\ra+\frac12(-1)^n\la \lambda \mc Df,\ms L_\xi\xi\ra\\
    &\qquad\qquad\qquad\qquad\qquad +\frac14\la \ms L_\xi(\lambda\xi),\ms L_\xi\xi\ra+(\ms L_\xi \lambda)\ms L_\xi f+\frac16(\ms L_\xi\lambda)\la \xi,\ms L_\xi\xi\ra\\
    &=\int_M2(\ms L_\xi \lambda)\ms L_\xi f+(-1)^n\lambda \ms L_{\ms L_\xi\xi}f+\frac13(\ms L_\xi\lambda)\la \xi,\ms L_\xi\xi\ra+\frac14(-1)^n\lambda\la \ms L_\xi\xi,\ms L_\xi\xi\ra\\
    &=(-1)^n\int_M\lambda\left(-\frac13\ms L_\xi\la \xi,\ms L_\xi\xi\ra+\frac14\la \ms L_\xi\xi,\ms L_\xi\xi\ra\right)=(-1)^n\int_M\lambda\left(-\frac1{12}\la \ms L_\xi\xi,\ms L_\xi\xi\ra+\frac13\la \xi,\ms L_\xi \ms L_\xi\xi\ra\right).
  \end{align*}
  To conclude, we note that
  \[\la \ms L_\xi\xi,\ms L_\xi\xi\ra-\la \xi,\ms L_\xi \ms L_\xi\xi\ra=\ms L_\xi\la \ms L_\xi\xi,\xi\ra=\la \xi,\cD\la \ms L_\xi\xi,\xi\ra\ra=\la \xi,\ms L_{\ms L_\xi\xi}\xi+\ms L_\xi \ms L_\xi\xi\ra=3\la \xi,\ms L_\xi \ms L_\xi\xi\ra,\]
  implying
  \[\la \ms L_\xi\xi,\ms L_\xi\xi\ra=4\la \xi,\ms L_\xi \ms L_\xi\xi\ra,\]
  and hence also the vanishing of $\{S_0,S_0\}$.
\end{proof}
\begin{rmk}
\label{rmk:semifree}
Note that the cochain complex of fields of the Courant contact model is not simply~\eqref{eq:M} equipped with $\dbar$. Rather, there is an additional differential, giving~$\ms M[1]$ the structure of a semifree $\cO$-module. By inspecting the linear part of the cohomological vector field~\eqref{eq:QBV}, we see that the additional terms in the differential are given by the operator $\cD$ (mapping $\cO[2]$ to $\cE[1]$), and by the map sending $\xi$ to $\ms L_\xi \lambda_0$  (mapping $\cE[1]$ to $\cO^![-n]$).
\end{rmk}
\subsection{Interpretation: contact structures and moduli spaces}
As we have seen in the proof, the local Lie algebra $\ms M$ is the semidirect product of $\RW(\cE)$ with the sheaf $\cO^![-n-1]$, with respect to the module structure given by Lie derivative. As such, it
fits into a short exact sequence
\[
0 \to \cO^![-n-1] \to \ms M \to \RW(\cE) \to 0.
\]
We interpret this as a fibre bundle over $B \RW(\cE)$, with fibre isomorphic to the sheaf of orientations on~$(M,\cO)$. 
This confirms the moduli-theoretic interpretation from the slogan given in~\S\ref{sec:intro}: the Courant contact model $B \ms M$ describes the moduli space of $n$-orientations considered up to generalized diffeomorphisms, working formally near a nondegenerate orientation. If we begin, following~\cite{let}, with a Courant algebroid that is the Atiyah algebroid of a gerbe with connective structure, we can interpret $B \ms M$ as a moduli space of pairs, consisting of a gerbe with connective structure and an orientation, considered just up to structure-preserving diffeomorphisms of~$(M,\cO)$.

Indeed, our computation above implicitly works globally in the fibre---but formally in the base---of this fibration. There is an obvious global notion of a nondegenerate orientation $\lambda$, and the sheaf of such is a torsor over $\cO^\times$. We can think of this as being the correct global object underlying the abelian formal moduli problem $B \cO^![-n-1]$. The correct global replacement for the base would be more subtle, since $\RW(\cE)$ is nonabelian; we comment on this in the context of generalized $G$-structures below.

Inspection makes it clear that this fits into a local version of the theory of shifted contact structures, defined by any of the local contact forms on~$B \RW(\cE)$ obtained from $\vartheta$ by setting $\lambda$ to a specific nondegenerate value. Following~\cite{Berktav1}, we can imagine presenting this structure by a subbundle of the tangent bundle of~$B\ms M$, mirroring the classical definition in terms of hyperplane distributions. The tangent bundle is presented by the local dg Lie algebra $C^\bu(\ms M, \ms M[1])$. With respect to an explicit basis for~$\cE$, we can consider this subbundle as presented by the map
\[
C^\bu(\ms M) \otimes \cE[1] \to C^\bu(\ms M, \ms M[1]), \qquad
e_\alpha \mapsto  \frac{\delta}{\delta \xi^\alpha} - \frac12 \eta_{\alpha \beta} \xi^\beta \frac{\delta}{\delta f}.
\]
Recall now that there is a classical equivalence between contact structures on a manifold $X$ and tuples consisting of a principal $\R^\times$-bundle $C$ over~$X$ and an exact symplectic form $\omega = d \vartheta$ on~$\Tot(C)$ for which $\vartheta$ is linear in the fibre coordinate. Under this equivalence, $(\Tot(C),\omega)$ is called the \emph{symplectification} of the contact manifold $X$. It is clear that our construction should fit into a local version of this story, and that the Courant contact model can be interpreted as the symplectification of the 
contact local moduli problem $B\RW(\cE)$.

\section{Twisted supergravity backgrounds} \label{sec:twisted-sugra-bg}
\subsection{$\ms N=1$ supergravity in ten dimensions}
Building upon \cite{bvsugra,Kupka:2025ulg}, let us now describe the data defining a twisted supergravity background for  $D=10$, $\ms N=1$ supergravity coupled to Yang--Mills multiplets. We will focus on the Euclidean version.\footnote{In particular, this means we take the fermion fields to be complex.} In this section, we fix $\mc E$ to be a smooth transitive Courant algebroid, corresponding to a vector bundle $E\to M$, with $\dim M=10$. Note that this implies that the signature of $E$ is $(10+p,10+q)$, with $p,q\ge0$. We will write \[H\to M\] for the line bundle of half-densities on $M$. We start by defining the bosonic field content of the theory, consisting of a spin structure and a dilaton.
\begin{dfn}
  A \emph{spin structure} is a reduction of the structure group of $E$ to \[\on{Spin}(10)\times O(p,10+q)\] along the map
  \[\on{Spin}(10)\times O(p,10+q)\to SO(10)\times O(p,10+q)\subset O(10+p,10+q).\]
  In particular, a spin structure defines an orthogonal splitting $E=C_+\oplus C_-$, with the signatures of $C_\pm$ being $(10,0)$ and $(p,10+q)$, respectively. The subbundle $C_+$ is called the \emph{generalized metric}. 
  Let us denote the (complex) chiral spinor bundles associated to $\on{Spin}(C_+)$ by $S_\pm$, and set $S=S_+\oplus S_-$. \end{dfn}
  
  In the context of this paper, we will in addition require that
  \[\rho|_{C_+}\colon C_+\to TM\]
  is an isomorphism.
From this, it follows that $\rho|_{C_-}\colon C_-\to TM\otimes \mb C$ is surjective (see e.g.\ the explicit form of $C_-$ in \cite{fullsugra}).

\begin{dfn}
  A \emph{dilaton} is an everywhere nonvanishing section
  $\sigma\in\Gamma(H)$.
\end{dfn}

Given a spin structure and a dilaton, there is a notion of 
\emph{generalized Levi-Civita connection}~\cite{CSW1,Garcia-Fernandez:2016ofz}.
Such connections exist, but are not unique;  we denote the set of all of them by~$\on{LC}$.
Importantly, for any $v\in \Gamma(C_-)$ and $\rho\in\Gamma(S\otimes H)$ the expressions $D_v\rho$ and $\di\rho$ are independent of the choice of the representative $D\in \on{LC}$ and hence depend only on the spin structure and the dilaton (in addition to $v$ and $\rho$). Similarly, denoting the projections onto $C_\pm$ by subscripts $\pm$, for any generalized Levi-Civita connection we have $D_{u_\pm}v_\mp=[u_\pm,v_\mp]_\mp$ for any $u,v\in\Gamma(E)$.

\begin{dfn}\label{dfn:twist-datum}
Given a spin structure and a dilaton, 
  a \emph{twist datum} for those data consists of an everywhere non-vanishing spinor $e\in\Gamma(S_+\otimes H)$ satisfying
  \begin{equation}\label{cond}
  \bar e\gamma_ae=0,\qquad \di e=0,\qquad D_ve=0,\qquad \forall v\in\Gamma(C_-).
\end{equation}
A tuple consisting of a spin structure, a dilaton, and a corresponding twist datum is a \emph{twisted supergravity background}.
\end{dfn}

\begin{rmk}\label{rmk:ordinary-data}
  To make a connection to the usual formulation of $\ms N=1$ supergravity (which is typically formulated in Lorentzian signature), first recall that $E$ locally has the form of Example \ref{ex:transitive-smooth} for some Lie group $G$, i.e.\
  \[E\cong_{\text{loc}}TM\oplus T^*M\oplus (M\times\mf g).\]
  The generalized metric $C_+$ can then be seen as the graph of a linear map $TM\to T^*M\oplus(M\times\mf g)$, i.e.\ it encodes a metric, 2-form, and a $G$-connection. The dilaton can be written as $\sigma=\sigma_ge^{-2\varphi}$, where $\sigma_g$ is the metric half-density and $\varphi$ is the dilaton function of $\ms N=1$ supergravity. In a similar fashion one can encode the fermionic field content, as well as the symmetries of the theory --- the latter including in particular local supersymmetry.
  
  In \cite{bvsugra,Kupka:2025ulg} the BV action for $\ms N=1$ supergravity was constructed. The Costello--Li twist \cite{CLsugra} is then obtained by expanding this action around the critical point defined by a twisted supergravity background, with all the other fields, ghosts, and antifields vanishing. The criticality conditions translate precisely to the conditions~\eqref{cond}, together with the vanishing of the generalized Ricci tensor.
\end{rmk}

Let us now analyse the consequences of these conditions. We observe that twist datum defines a subbundle $L$ of~$C_+$,  the annihilator of $e$:
\[L\coloneqq\{u\in C_+\otimes \mb C\mid \slashed ue=0\}\subset C_+\otimes \mb C,\qquad \mc L\coloneqq\Gamma(L).\]
\begin{thm}
 The subbundle $L$ is Lagrangian and involutive.
\end{thm}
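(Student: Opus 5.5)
The proof will split into a pointwise algebraic statement (Lagrangian) and a differential statement (involutive). Here ``Lagrangian'' means maximal isotropic inside $C_+\otimes\mb C$, which has complex rank $10$; the bracket is the complexified Courant bracket on $\mc E\otimes\mb C$.

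\emph{Lagrangian.} Take $u,v\in L$ at a point and use the Clifford relation $\slashed u\slashed v+\slashed v\slashed u=2\la u,v\ra$. Applied to $e$, it gives $2\la u,v\ra e=0$. Since $e$ is nowhere vanishing, $\la u,v\ra=0$, so $L$ is isotropic.

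For maximality we need $\operatorname{rk}L=5$ at every point. For a chiral spinor in ten dimensions, the conditions $\bar e\gamma_a e=0$ are exactly Cartan's purity conditions. A nonzero pure spinor has annihilator of dimension exactly $5=\tfrac12\dim C_+$. So $L$ is a maximal isotropic subspace of $C_+\otimes\mb C$ at each point. Because the rank is constant and $e$ is smooth, $L$ is a smooth subbundle.

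\emph{Involutivity: the $C_-$-component.} Let $u,v\in\mc L$. We must show $[u,v]_-=0$ and $\slashed{[u,v]_+}\,e=0$.

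For the first, pair with an arbitrary $w\in\Gamma(C_-)$. By pairing compatibility and $\la v,w\ra=0$,
\[
\la [u,v],w\ra=-\la v,[u,w]\ra=-\la v,[u,w]_+\ra .
\]
Failed antisymmetry gives $[u,w]=-[w,u]+\cD\la u,w\ra=-[w,u]$. The quoted property of generalized Levi-Civita connections then gives $[w,u]_+=D_wu$. Hence
\[
\la[u,v],w\ra=\la v,D_wu\ra .
\]
Differentiating $\slashed u e=0$ along $w$ and using $D_we=0$ yields $\slashed{(D_wu)}\,e=0$, so $D_wu\in\mc L$. Isotropy then kills the pairing, and therefore $[u,v]_-=0$.

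\emph{Involutivity: the $C_+$-component.} Here I would use a derived-bracket description of the Courant bracket restricted to $C_+$ through the Dirac operator $\di$ on $\Gamma(S\otimes H)$. The identity to aim for is
\[
\slashed{[u,v]_+}\,\psi=c\,\big[[\slashed u,\di],\slashed v\big]\psi
\]
for $u,v\in\Gamma(C_+)$, a universal constant $c$, and $\psi\in\Gamma(S\otimes H)$, possibly up to correction terms involving only $\la u,v\ra$ or $\cD\la u,v\ra$. Those correction terms vanish here because $u,v\in\mc L$ are isotropic.

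To establish the identity, I would write both sides using a generalized Levi-Civita connection. I would use the local torsion-free expression
\[
[u,v]_+=D_uv-D_vu+\la Du,v\ra^\sharp,
\]
and expand $\di=\gamma^aD_a$ with the Clifford relations. The independence of the choice of $D\in\on{LC}$ ensures the result is intrinsic.

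Once the identity holds, expand the double commutator applied to $e$. Every term contains $\slashed u e$, $\slashed v e$ or $\di e$ acting first, except those that recombine into the left-hand side, and all three of these vanish by \eqref{cond}. This gives $\slashed{[u,v]_+}e=0$, hence $[u,v]\in\mc L$.

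\emph{Main obstacle.} I expect the hard step to be the derived-bracket identity itself: fixing signs and constants, and checking that the connection-dependent and dilaton-dependent terms cancel. If that fails, the fallback is a direct computation of $\slashed{[u,v]_+}e$ in the form
\[
-\slashed v D_ue+\slashed u D_ve+\la D_au,v\ra\gamma^a e .
\]
The term $\gamma^a\la D_au,v\ra e$ would then be rewritten using $\gamma^aD_ae=0$ together with the purity decomposition $D_ae\in\operatorname{span}(e)\oplus\mc L\cdot(\text{degree-one spinors})$, to show that everything collapses.
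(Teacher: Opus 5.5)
Your proposal is correct and is essentially the paper's proof: isotropy from the Clifford relation, maximality from the classification of ten-dimensional pure spinors, the $C_-$-component via pairing compatibility together with $[w,u]_+=D_wu$ and $D_we=0$, and the $C_+$-component via a Dirac-operator formula. For that last step the paper cites $\ms L_ue=\tfrac12\{\di,\slashed u\}e$ from \cite[(F.11)]{Kupka:2025ulg} and combines it with the Leibniz rule $\ms L_u(\slashed v e)=(\cancel{\ms L_u v})e+\slashed v\,\ms L_u e$, which together amount to your derived-bracket identity. The identity you flag as the main obstacle holds exactly: take the inner bracket to be the anticommutator and $c=\tfrac12$, with no correction terms, i.e.\ $\slashed{[u,v]_+}=\tfrac12\big[\{\slashed u,\di\},\slashed v\big]$ for all $u,v\in\Gamma(C_+)$. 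Your expansion with $\di=\gamma^aD_a$ and the torsion-free formula confirms this, so the fallback is not needed.
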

\begin{proof}
Isotropy follows easily from the following consideration: if $u,v\in L$ then $0=\{\slashed u,\slashed v\}e=2\la u,v\ra e$ and thus $\la u,v\ra$ vanishes. The maximality follows from the classification of pure spinors in dimension ten~\cite{Igusa1970classification}.

Since $D$ is Levi-Civita, it preserves $C_\pm$, and so in particular we have $D_{w}\Gamma(L)\subset \Gamma(C_+\otimes\mb C)$ for any $w\in\Gamma(C_-)$. Since
\[\forall w\in\Gamma(C_-),\; u\in \Gamma(L)\qquad 0=D_w(\slashed ue)=(\cancel{D_wu})e,\]
we have that in fact $D_{C_-}$ preserves $L$. Next, using the pairing compatibility for the Courant algebroid,
\[\forall w\in\Gamma(C_-),\; u,v\in \Gamma(L)\qquad\la [u,v],w\ra=\la v,[w,u]\ra=\la v,D_wu\ra=0,\]
showing that $[\Gamma(L),\Gamma(L)]\subset\Gamma(C_+\otimes \mb C)$.

Finally, we make use of the formula~\cite[(F.11)]{Kupka:2025ulg}:
\[\ms L_ue=\tfrac12\{\di,\slashed u\}e,\qquad \forall u\in\Gamma(C_+)\]
to deduce that for all $u\in\Gamma(L)$ we have $\ms L_ue=0$. Thus, for $u,v\in\Gamma(L)$ we have
\[(\cancel{\ms L_uv})e=\ms L_u(\slashed ve)-\slashed v\ms L_ue=0,\]
hence $[u,v]\in\Gamma(L)$. This proves the involutivity of $L$.
\end{proof}
\begin{cor}
  The subbundle $L$ defines a complex structure on $M$, with $\rho(L)=T^{0,1}M\subset TM\otimes\mb C$.
\end{cor}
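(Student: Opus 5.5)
The plan is to set $P \coloneqq \rho(L)\subset TM\otimes\mb C$ and show that $P$ is an involutive subbundle with $P\oplus\bar P = TM\otimes\mb C$. By Newlander--Nirenberg, $P$ is then the $(0,1)$-tangent bundle of an integrable complex structure, and $\rho(L)=T^{0,1}M$ holds by definition.

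First, $\rho|_{C_+\otimes\mb C}\colon C_+\otimes\mb C\to TM\otimes\mb C$ is an isomorphism, being the complexification of the isomorphism $\rho|_{C_+}$ assumed above. So $P$ is a smooth subbundle of rank equal to $\operatorname{rk}L$. The previous theorem says $L$ is Lagrangian in $C_+\otimes\mb C$, hence of rank $5$. Involutivity of $P$ follows at once from Lemma~\ref{lem:Leibmap}: $\rho$ preserves brackets and $[\mc L,\mc L]\subset\mc L$, so $[\rho(\mc L),\rho(\mc L)]=\rho([\mc L,\mc L])\subset\rho(\mc L)$.

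The remaining, and main, step is $P\cap\bar P=0$, which is equivalent to $L\cap\bar L=0$ because $\rho$ is injective on $C_+\otimes\mb C$ and commutes with conjugation, as $C_+$ and $\rho$ are real. The key input is that the pairing on $C_+$ is positive definite, of signature $(10,0)$. If $u\in L\cap\bar L$, then $\bar u\in L$ too, and isotropy of $L$ gives $\la u,\bar u\ra=0$. Positive definiteness of the Hermitian form $\la u,\bar u\ra$ on $C_+\otimes\mb C$ then forces $u=0$.

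Finally, the ranks add up: $\operatorname{rk}P+\operatorname{rk}\bar P=10=\operatorname{rk}_{\mb C}(TM\otimes\mb C)$, so $TM\otimes\mb C=P\oplus\bar P$. Define an almost complex structure $J$ to be $-i$ on $P$ and $+i$ on $\bar P$. This $J$ is real because the two eigenbundles are conjugate, and it is integrable because $P$ is involutive. This gives the complex structure with $T^{0,1}M=\rho(L)$. The only point to double-check is that the nonvanishing of $e$ makes $L$ a genuine smooth rank-$5$ subbundle (pure spinor), which the proof of the previous theorem already uses.
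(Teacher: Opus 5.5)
Your proof is correct and follows essentially the same route as the paper: isotropy of $L$ together with positive-definiteness of the pairing on $C_+$ gives $L\cap\bar L=0$. This is transported to $\rho(L)\cap\overline{\rho(L)}=0$ via the isomorphism $\rho|_{C_+}$, and involutivity of $\rho(L)$ is inherited from $L$ because $\rho$ preserves brackets. The only differences are that the paper phrases the key step using the real vectors $u+\bar u$ and $i(u-\bar u)$ rather than the Hermitian form $\langle u,\bar u\rangle$, and it leaves implicit the rank count and the Newlander--Nirenberg step that you spell out.
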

\begin{proof}
  For any $u\in L\cap \bar L$ we have $\bar u\in L\cap \bar L$, and so both $u+\bar u,i(u-\bar u)\in L\cap (C_+\otimes \mb R)\subset C_+\otimes \mb C$. But the latter intersection is empty, since $L$ consists of isotropic vectors and $C_+$ has a positive-definite inner product. Thus $u=0$ and $L\cap \bar L=0$, implying $\rho(L)\cap\overline{\rho(L)}=0$ due to the assumption that $\rho|_{C_+}$ is an isomorphism. The involutivity of $\rho(L)$ follows from the involutivity of $L$.
\end{proof}

\begin{rmk}
  A twisted supergravity background is equivalent to a torsion-free $SU(5)\times O(p,10+q)$-structure in the sense of $O(10+p,10+q)\times\mb R^+$ generalized geometry, analogous to the 6D statement in \cite{Ashmore:2019rkx}. Further, the subbundle $L$ corresponds to the involutive subbundle $L_{-1}$ in that reference.
\end{rmk}

\subsection{Reduction and extension}\label{subsec:red-ext}
In the preceding section we started with a smooth transitive Courant algebroid $\mc E$ over a 10-dimensional manifold $M$ equipped with a twisted supergravity background, and we showed that this induces an involutive isotropic subsheaf $\mc L\coloneqq \Gamma(L)$ inside the smooth complex Courant algebroid $\mc E\otimes\mb C$.

Let us now use $\mc L\subset\mc E\otimes\mb C$ as input in the reduction and extension procedure of \S\ref{sec:two-constructions}. Note that $\mc O$ and $\mc O^\mc L$ are the sheaves of smooth and holomorphic functions, respectively. Furthermore,
\[\mc F\cong \Gamma(L^\perp/L),\qquad \mc F^\mc L=\{\text{$\mc L$-flat sections of $\mc F$}\}.\]
\begin{lem}
  The condition $\mc F^\mc L\otimes_{\mc O^\mc L}\mc O=\mc F$ is locally satisfied.
\end{lem}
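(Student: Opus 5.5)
We need to show that, locally, $\mc F=\Gamma(L^\perp/L)$ admits a frame consisting of $\mc L$-flat sections. Equivalently, the flat $\mc L$-connection $\nabla_l\pi(u)=\pi([l,u])$ on $F=L^\perp/L$ makes $F$ into a holomorphic vector bundle, whose sheaf of holomorphic sections is $\mc F^\mc L$.

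First we check that $\nabla$ is a genuine Lie algebroid connection. By the Leibniz rule, $\nabla_l(f s)=f\nabla_l s+(\rho(l)f)s$. The bracket is also $C^\infty$-linear in $l$ modulo $\mc L$: by Leibniz and failed antisymmetry,
\[
[fl,u]=f[l,u]-(\rho(u)f)\,l+\la l,u\ra\,\cD f .
\]
Here the middle term lies in $\mc L$, and the last term vanishes because $u\in\mc L^\perp$. Flatness follows from the Jacobi identity together with involutivity $[\mc L,\mc L]\subset\mc L$. This is the observation already made before Theorem~\ref{thm:reduction}.

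Next we use the corollary above: $\rho|_L\colon L\to T^{0,1}M$ is an isomorphism. The injectivity comes from $L\cap\ker\rho\subset L\cap C_-\otimes\mb C=0$, since $L\subset C_+\otimes\mb C$ and $\rho|_{C_+}$ is an isomorphism. Transporting $\nabla$ along $\rho|_L^{-1}$ turns it into a flat $T^{0,1}$-connection, i.e.\ a $\bar\partial$-operator
\[
\bar\partial_F\colon\Gamma(F)\to\Omega^{0,1}(F),\qquad \bar\partial_F^2=0,
\]
extended to $\Omega^{0,\bullet}(F)$ by the Leibniz rule. Flatness of $\nabla$ together with bracket preservation of $\rho$ (Lemma~\ref{lem:Leibmap}) gives the integrability $\bar\partial_F^2=0$. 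We then invoke the Koszul--Malgrange theorem (the holomorphic Frobenius theorem for vector bundles): a smooth complex vector bundle with an integrable $\bar\partial$-operator over a complex manifold admits, near every point, a frame $s_1,\dots,s_r$ with $\bar\partial_F s_i=0$. These $s_i$ are exactly $\mc L$-flat sections of $\mc F$.

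Finally, a flat frame identifies $F|_U\cong U\times\mb C^r$ with $\bar\partial_F$ becoming the standard $\bar\partial$. Hence $\mc F^\mc L|_U$ is the free $\mc C_\hol$-module on the $s_i$, and $\mc F^\mc L\otimes_{\mc O^\mc L}\mc O\to\mc F$ is an isomorphism on $U$. Both sides are free on the same frame: a smooth section $\sum f_is_i$ is flat iff all $f_i$ are holomorphic.

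The main obstacle is the existence of the local flat frame. Everything else is bookkeeping, and at that step we rely on Koszul--Malgrange (equivalently Newlander--Nirenberg) rather than proving it. One subtlety to double-check is that $L^\perp/L$ has constant rank. This holds because $L$ is a subbundle (the annihilator of a nowhere-vanishing pure spinor) and the pairing is nondegenerate, so $L^\perp$ is a subbundle of rank $\operatorname{rk}E-\operatorname{rk}L$.
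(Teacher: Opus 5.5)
Your proposal is correct and takes the same route as the paper. The paper's proof just notes that $L\cong T^{0,1}M$ turns the flat $\mc L$-connection on $L^\perp/L$ into an integrable Dolbeault operator, so $L^\perp/L$ is holomorphic and locally generated by holomorphic ($\mc L$-flat) sections; you fill in the details it leaves implicit (well-definedness and flatness of $\nabla$, Koszul--Malgrange, constant rank). One small slip: writing $L\cap\ker\rho\subset L\cap(C_-\otimes\mb C)$ suggests $\ker\rho\subset C_-\otimes\mb C$, which is false in general (for the standard algebroid $\ker\rho=T^*M$), but injectivity of $\rho|_L$ follows directly from $L\subset C_+\otimes\mb C$ and $\rho|_{C_+}$ being an isomorphism, as you also state.
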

\begin{proof}
  Since $L\cong T^{0,1}M$, the flat $\mc L$-connection on $L^\perp/L$ is equivalent to a Dolbeault operator on $L^\perp/L$, i.e.\ $L^\perp/L$ is a holomorphic vector bundle over $M$. Thus its space of smooth sections is locally generated by the holomorphic ones, proving the claim.
\end{proof}

Next, we set
\[\mc O'\coloneqq(\Omega^{0,\bullet},\bar\partial),\]
and take $i\colon \mc O^\mc L\to \mc O'$ to be the inclusion.
Finally, since the Lie derivative w.r.t.\ any holomorphic vector field $x$ is a derivation of the Dolbeault complex, we can set
\[\lambda\colon \Der\mc O^\mc L\to \Der\mc O',\qquad \lambda(x)\coloneqq L_x.\]
Since this extends the action of $x$ on holomorphic functions, the property \eqref{new-commutative} is satisfied.

Starting with the smooth complex Courant algebroid $\mc E\otimes\mb C$, we can thus perform the reduction along $\mc L$ (Theorem \ref{thm:reduction}) to obtain the reduced Courant algebroid $\mc F^\mc L$ over $\mc O^\mc L$, i.e.\ a holomorphic Courant algebroid. Since $L^\perp/L\cong C_-\otimes\mb C$ and $\rho|_{C_-}$ is surjective, it follows that $\mc F^\mc L$ is transitive.

Subsequently, we apply the extension procedure along $\lambda$ (Theorem \ref{thm:courant-extension} and Proposition \ref{prop:postcompose}), resulting in the Courant algebroid $\mc E_{\text{twist}}\coloneqq\mc E'=\mc F^\mc L\otimes_{\mc O^\mc L}\mc O'$ over $\mc O'$, i.e.\ a Dolbeault resolution of the holomorphic algebroid $\mc F^\mc L$. We have thus shown the main result of this section:
\begin{thm}\label{thm:big}
  To any twist datum on a smooth transitive Courant algebroid $\mc E$ over a 10-dimensional manifold $M$ we can canonically assign a transitive Courant algebroid $\mc E_{\text{\rm twist}}$ on the Dolbeault space $M_\dbar$ (with the complex structure on $M$ defined by the twist datum).
\end{thm}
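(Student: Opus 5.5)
The plan is to assemble Theorem~\ref{thm:big} from the two constructions of \S\ref{sec:two-constructions}, checking their hypotheses in the present setting. First, from the twist datum $e$ we form $\mc L=\Gamma(L)\subset\mc E\otimes\mb C$. By the theorem on $L$ it is involutive and Lagrangian in $C_+\otimes\mb C$. It is therefore isotropic in $E\otimes\mb C$, though not Lagrangian there, and $\mc L$ is locally free, so the assumptions preceding Theorem~\ref{thm:reduction} hold. Since $\rho|_{C_+}$ is an isomorphism, the corollary gives $\rho(L)=T^{0,1}M$ for an integrable complex structure. Hence $\mc O^{\mc L}=\mc C_\hol$. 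Canonicity is automatic at this stage, because $L$ is defined intrinsically as the annihilator of $e$ and involves no choices.

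Second, the lemma above verifies the hypothesis $\mc F^{\mc L}\otimes_{\mc O^{\mc L}}\mc O\cong\mc F$ locally. Theorem~\ref{thm:reduction} therefore produces a holomorphic Courant algebroid $\mc F^{\mc L}$. I would argue transitivity directly. First note that $L^\perp=L\oplus(C_-\otimes\mb C)$, since $L$ is maximal isotropic in $C_+\otimes\mb C$ and the two summands $C_\pm$ are orthogonal. Thus $L^\perp/L\cong C_-\otimes\mb C$. Because $\rho|_{C_-}$ surjects onto $TM\otimes\mb C$, the induced map $L^\perp/L\to (TM\otimes\mb C)/T^{0,1}M\cong T^{1,0}M$ is surjective. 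Since $\mc F^{\mc L}$ generates $\mc F$ over smooth functions, surjectivity passes to the holomorphic anchor $\mc F^{\mc L}\to\Der\mc C_\hol$.

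Third, take $i\colon\mc C_\hol\to\Omega^{0,\bullet}$ and $\lambda(x)=L_x$. The map $\lambda$ preserves brackets, since it sends commutators of holomorphic vector fields to commutators of derivations of the Dolbeault complex. It also satisfies~\eqref{new-commutative}, because $L_x$ restricts to $x$ on holomorphic functions. Proposition~\ref{prop:postcompose} then supplies a lift $\hrho=\lambda\circ\rho$, and Theorem~\ref{thm:courant-extension} gives a unique Courant algebroid $\mc E_{\rm twist}=\mc F^{\mc L}\otimes_{\mc C_\hol}\Omega^{0,\bullet}$ over $M_\dbar$. Uniqueness is what makes the output canonical.

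The remaining, mildly delicate point is transitivity of $\mc E_{\rm twist}$. This is where I expect the main care is needed, because $\Der\Omega^{0,\bullet}$ is only quasi-isomorphic to $\Omega^{0,\bullet}(T^{1,0})$. I would invoke Remark~\ref{rmk:strict} to work with the model $\Omega^{0,\bullet}(T^{1,0})$. In that model $\rho'=\Ind_i\rho$ is the Dolbeault extension of a surjective holomorphic bundle map, and is therefore surjective.
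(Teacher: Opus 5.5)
Your proposal is correct and follows the paper's route: reduce $\mc E\otimes\mb C$ along $\mc L=\Gamma(L)$, using the local-generation lemma, to get a transitive holomorphic Courant algebroid $\mc F^{\mc L}$, then extend scalars to $\Omega^{0,\bullet}$ via Proposition~\ref{prop:postcompose} with $\lambda(x)=L_x$ and Theorem~\ref{thm:courant-extension}. Your derivation of $L^\perp=L\oplus(C_-\otimes\mb C)$ and your check that transitivity survives the extension (working in the model $\Omega^{0,\bullet}(T^{1,0})$, as Remark~\ref{rmk:strict} allows) spell out steps the paper leaves implicit.
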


\begin{rmk}
  Recall from \S\ref{subsec:reduction} that the holomorphic structure on $\mc F\cong \Gamma(C_-\otimes\mb C)$ is defined via the flat $\mc L\cong \Gamma(T^{0,1}M)$-connection given by $D_uw$ for $u\in \Gamma(L)$ and $w\in \Gamma(C_-\otimes\mb C)$, with $D$ the generalized Levi-Civita operator defined above. Upon explicit evaluation, the resulting Dolbeault-resolved algebroid in particular recovers the complex $(\Omega^{0,\bullet}(Q),\bar D)$ of \cite{delaOssa:2014cia}, where $Q\cong C_-\otimes \mb C$.  
\end{rmk}

\subsection{Calabi--Yau example}\label{subsec:cy}
  Let $M$ be a Calabi--Yau fivefold. Suppose further that $M$ is equipped with a Ricci flat K\"ahler metric $g$ (corresponding to a  K\"ahler form $\omega$) and a chosen nonzero covariantly constant spinor $\psi$.
  
  We set $\mc E=\Gamma(TM\oplus T^*M)$ to be the standard smooth Courant algebroid, and we equip it with the spin structure such that
  \[C_\pm=\on{graph}(\pm g)=\{x\pm g(x,\slot)\mid x\in TM\},\]
  with the $\on{Spin}(5)$-structure on $C_+$ given by the one on the Calabi--Yau fivefold via the identification $C_+\cong TM$ through the anchor map. To obtain a twist datum, we take the dilaton $\sigma$ to be the metric half-density and
  \[e\coloneqq\sigma\psi.\]
  
  The associated subbundle $L$ is the preimage of $T^{0,1}M$ under $\rho|_{C_+}$, i.e.\
  \[L=\{x+g(x,\slot)\mid x\in T^{0,1}M\}=\{x+i\omega(x,\slot)\mid x\in T^{0,1}M\}=e^{i\omega}(T^{0,1}M),\]
  where we defined the action of $\omega$ on $(TM\oplus T^*M)\otimes\mb C$ by $y+\alpha\mapsto \omega(y,\cdot)$. Since $\omega$ is closed, we note that this action locally coincides with $\ms L_\beta$ for some 1-form $\beta$, and hence the action of $\omega$ preserves the bracket on $\mc E\otimes\mb C$. This in particular explains the involutivity of $L$.
  
  Next, using the anchor and $\omega$ we have the following isomorphism of complex smooth vector bundles:
  \[\tau\colon T^{1,0}M\oplus T^{*1,0}M \isom T^{1,0}M\oplus T^{0,1}M=TM\otimes\mb C \isom C_-\otimes \mb C\isom L^\perp/L,\]
  explicitly given by
  \[\tau(y+\alpha)=[(y+i\omega(y,\slot))+\tfrac i2(\omega^{-1}(\alpha)-i\alpha)]=[e^{i\omega}y+\tfrac i2e^{-i\omega}\omega^{-1}\alpha],\]
  where $\omega^{-1}\colon T^*M\otimes \mb C\to TM\otimes \mb C$ is the inverse of the map $x\mapsto \omega(x,\slot)$ and we inserted the factor $i/2$ for later convenience.
  \begin{prop}
    Under the identification $\tau$ the reduced Courant algebroid $\mc F^\mc L$ corresponds to the standard holomorphic Courant algebroid on $M$.
  \end{prop}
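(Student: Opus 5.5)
The plan is to reduce everything to the untwisted picture using the $B$-field transformation $e^{i\omega}$, which, as remarked above, is a Courant automorphism of $\mc E\otimes\mb C$ because $\omega$ is closed. Since $e^{i\omega}$ preserves the pairing, anchor and bracket, it carries the reduction data $(\mc L,\mc L^\perp)$ to the data $(\Gamma(T^{0,1}M),\Gamma(T^{0,1}M)^\perp)$, where $T^{0,1}M^\perp=TM\otimes\mb C\oplus T^{*1,0}M$. It therefore induces an isomorphism of reduced Courant algebroids over $\mc C_\hol$. By Example~\ref{ex:holo}, the reduction along $T^{0,1}M$ is the standard holomorphic Courant algebroid, realized on $T^{1,0}M\oplus T^{*1,0}M$ through the obvious map $y+\alpha\mapsto[y+\alpha]$.

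It then remains to check that $\tau$ agrees with the composite of this standard identification and $e^{i\omega}$, possibly up to a harmless automorphism. This is where the factor $\tfrac i2$ and the $\omega^{-1}$ term in $\tau$ must be handled. I would argue in three steps.

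First, verify that $\tfrac i2 e^{-i\omega}\omega^{-1}\alpha$ is congruent modulo $L$ to $e^{i\omega}$ applied to a $(1,0)$-form. For $\alpha\in T^{*1,0}M$ we have $\omega^{-1}\alpha\in T^{0,1}M$, since $\omega(x,\cdot)$ is of type $(1,0)$ exactly for $x\in T^{0,1}M$. Hence $e^{i\omega}\omega^{-1}\alpha\in L$, and
\[
e^{-i\omega}\omega^{-1}\alpha = e^{i\omega}\omega^{-1}\alpha-2i\alpha \equiv -2i\alpha \pmod L .
\]
So $\tfrac i2 e^{-i\omega}\omega^{-1}\alpha\equiv\alpha = e^{i\omega}\alpha$, using $e^{i\omega}\alpha=\alpha$ for forms. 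This gives $\tau(y+\alpha)=[e^{i\omega}(y+\alpha)]$, which is exactly the composite described above.

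Second, check that the representatives $e^{i\omega}(y+\alpha)$ lie in $\mc L^\perp$. This holds because $\langle e^{i\omega}(y+\alpha),e^{i\omega}x\rangle=\alpha(x)=0$ for $x\in T^{0,1}M$ and $\alpha$ of type $(1,0)$.

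Third, confirm that $\mc L$-flatness of $[e^{i\omega}(y+\alpha)]$ corresponds to holomorphicity of $y$ and $\alpha$. This follows by transporting the statement of Example~\ref{ex:holo}, where $[\Gamma(T^{0,1}),y+\alpha]\subset\Gamma(T^{0,1})$ modulo $T^{0,1}$ is equivalent to $\bar\partial y=0$ and $\bar\partial\alpha=0$.

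The main obstacle is making Example~\ref{ex:holo} precise, since the paper only states it. I would verify it by computing $[x,y+\alpha]=L_xy+L_x\alpha$ for $x\in T^{0,1}$ and projecting onto $T^{1,0}\oplus T^{*1,0}$. This yields $\bar\partial_x y$ and $\iota_x\bar\partial\alpha$, because $\iota_x\alpha=0$. It then remains to compare the reduced bracket on holomorphic representatives with the standard holomorphic bracket $L_xy+L_x\beta-\iota_y\partial\alpha$, which is immediate since $d=\partial$ on holomorphic forms after projection.

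Finally, I would note that all the structure maps (pairing, anchor, bracket) are intertwined. The pairing and anchor are intertwined because $e^{i\omega}$ is orthogonal and anchor-preserving, and the bracket is intertwined by the automorphism property. This establishes the proposition.
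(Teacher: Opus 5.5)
Your proposal is correct and follows essentially the paper's route: both transport the problem along the Courant automorphism $e^{i\omega}$ to the reduction along $T^{0,1}M$, where the term $\tfrac i2 e^{-i\omega}\omega^{-1}\alpha$ collapses to $\alpha$ modulo $L$. The paper does the same computation after applying $e^{-i\omega}$, discarding $\tfrac i2\omega^{-1}\alpha\in T^{0,1}M$ by involutivity.

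The only difference is in how the bracket is identified. Your exact identity $\tau(y+\alpha)=[e^{i\omega}(y+\alpha)]$ lets you transport pairing, anchor and the full bracket at once; this is why you also need the bracket part of Example~\ref{ex:holo}, which you check directly. The paper instead verifies only the vector--vector bracket and lets the Courant axioms fix the rest.

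One harmless slip: it is $e^{-i\omega}$, not $e^{i\omega}$, that carries $\mc L$ to $\Gamma(T^{0,1}M)$.
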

  \begin{proof}
    First, note that elements of $\mc F^\mc L$ correspond to $y+\alpha\in\Gamma(T^{1,0}M\oplus T^{*1,0}M)$ such that
    \[[\mc L,e^{i\omega}y+\tfrac i2e^{-i\omega}\omega^{-1}\alpha]\subset\mc L.\]
    Since $e^{\pm i\omega}$ preserves the bracket, we have that for any $\bar x\in T^{0,1}M$ the condition
    \begin{equation}\label{flatness}
      [e^{i\omega}\bar x,e^{i\omega}y+\tfrac i2e^{-i\omega}\omega^{-1}\alpha]\in \Gamma(e^{i\omega}T^{0,1}M)
    \end{equation}
    is equivalent to
    \[\Gamma(T^{0,1}M)\ni [\bar x,y+\tfrac i2e^{-2i\omega}\omega^{-1}\alpha]=[\bar x,y+\tfrac i2\omega^{-1}\alpha+\alpha].\]
    Due to the involutivity of $T^{0,1}M$ we automatically have $[\bar x,\omega^{-1}\alpha]\in\Gamma(T^{0,1}M)$ and thus the condition \eqref{flatness} is equivalent to
    \[L_{\bar x}y\in \Gamma(T^{0,1}M),\qquad L_{\bar x}\alpha=0.\]
    Asking this to hold for all $\bar x$ is then the condition of $y$ and $\alpha$ being holomorphic.
    
    It is easy to see that (with the factor $i/2$ included in the definition of $\tau$), the anchor map and pairing on (the holomorphic sections of) $T^{1,0}M\oplus T^{*1,0}M$ take precisely the required form. For the bracket we first take two holomorphic vector fields $y,z\in\Gamma(T^{1,0}M)$ and calculate
    \[[e^{i\omega}y,e^{i\omega}z]=e^{i\omega}[y,z]=\tau(L_yz),\]
    recovering the ``vector-vector'' part of the bracket on the standard holomorphic Courant algebroid. The remaining parts of the bracket are then uniquely fixed by the Courant algebroid axioms.
  \end{proof}
  \begin{cor}\label{cor:cy}
    The resulting Courant algebroid $\mc E_{\text{\rm twist}}$ is the standard Courant algebroid over $M_{\bar\partial}$. 
  \end{cor}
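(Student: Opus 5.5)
By the preceding Proposition, $\mc F^\mc L$ is identified via $\tau$ with the standard holomorphic Courant algebroid $\mc E_{\hol}=\Gamma_{\hol}(T^{1,0}M\oplus T^{*1,0}M)$ over $\mc O^\mc L=\mc C_\hol$. It therefore suffices to show that extending scalars from $\mc C_\hol$ to $\mc O'=(\Omega^{0,\bullet},\bar\partial)$ along the lift $\hat\rho=\lambda\circ\rho$, with $\lambda(x)=L_x$, turns the standard holomorphic Courant algebroid into the standard Courant algebroid over $M_{\bar\partial}$ of Example~\ref{ex:long}. Here $\Der\Omega^{0,\bullet}$ is replaced by $\Omega^{0,\bullet}(T^{1,0}M)$, as in Remark~\ref{rmk:strict}. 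The argument will rest on the uniqueness clause of Theorem~\ref{thm:courant-extension}: a Courant algebroid structure on $\mc E'=\mc E_\hol\otimes_{\mc C_\hol}\Omega^{0,\bullet}$ is determined by its anchor $\rho'$ together with its restriction to $\mc E_\hol$.

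The steps are as follows. First, identify the underlying dg modules. Since $T^{1,0}M\oplus T^{*1,0}M$ is a holomorphic bundle, $\mc E_\hol\otimes_{\mc C_\hol}\Omega^{0,\bullet}\cong\Omega^{0,\bullet}(T^{1,0}M\oplus T^{*1,0}M)$, with the differential $\bar\partial$ induced from $1\otimes\bar\partial$. This is locally clear from holomorphic frames. Second, compare the structure maps. The pairing $\Ind_i\eta$ is the $\Omega^{0,\bullet}$-bilinear extension of the holomorphic pairing, which is exactly the pairing of Example~\ref{ex:long}. The anchor $\rho'(\zeta\otimes y)=\zeta\wedge L_y$ agrees with $\rho(\zeta\otimes y)\vartheta=\zeta\wedge L_y\vartheta$. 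Third, observe that the standard Courant algebroid of Example~\ref{ex:long} restricts on holomorphic sections ($\zeta,\xi,\vartheta,\kappa$ constant, or more generally holomorphic functions) to the standard holomorphic bracket. For instance, $[y,\beta]=(\partial i_y+i_y\partial)\beta=L_y\beta$ and $[\alpha,z]=i_z\partial\alpha$ agree with $L_x\beta-i_y\partial\alpha$ up to the antisymmetry convention. Since the standard algebroid over $M_{\bar\partial}$ is a Courant algebroid, with anchor $\rho'$ and restricting to the given structure, uniqueness yields $\mc E_{\rm twist}\cong$ standard.

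The main obstacle is the compatibility needed for uniqueness. The standard structure over $M_{\bar\partial}$ is defined intrinsically, with $\Der\mc O$ modelled by $\Omega^{0,\bullet}(T^{1,0})$ up to quasi-isomorphism, and one must check that this model is the same as the one produced by the explicit extension formula. I would handle this by directly evaluating the extension bracket
\[
[u\otimes a,v\otimes b]'=[u,v]\otimes ab+v\otimes a\hat\rho(u)b-u\otimes b\hat\rho(v)a+b\la u,v\ra\mc D'a
\]
on generators $\zeta\otimes y$, $\xi\otimes\alpha$ and matching it line by line with Example~\ref{ex:long}. The sign of $\mc D$ on forms of positive degree needs the most care. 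If a sign discrepancy appears, I would absorb it into the identification $\tau$, or into the Koszul conventions for the generators, which lie in degree zero.
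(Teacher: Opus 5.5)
Your proposal is correct and is essentially the paper's argument. The paper states the corollary without a separate proof, as an immediate consequence of the preceding Proposition together with extension of scalars along $\lambda(x)=L_x$ (Theorem~\ref{thm:courant-extension} and Proposition~\ref{prop:postcompose}); the explicit bracket formula, or equivalently its uniqueness clause, then reproduces the standard algebroid of Example~\ref{ex:long}.

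One correction concerns the mismatch you spotted in $[\alpha,z]$. It is not a matter of convention and cannot be absorbed into $\tau$. The third bracket line of Example~\ref{ex:long} as printed has a sign typo: failed antisymmetry against the second line forces $[\alpha,z]=-i_z\partial\alpha$, as in Example~\ref{dfn:standard}. Once that sign is fixed, the extension formula $[\alpha\otimes\xi,z\otimes\vartheta]'=-i_z\partial\alpha\otimes\xi\vartheta-\alpha\otimes\vartheta L_z\xi+\vartheta\,\alpha(z)\,\mc D'\xi$ matches Example~\ref{ex:long} line by line.
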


\section{Type I minimal BCOV theory and generalized \texorpdfstring{$G$}{G}-structures}\label{sec:last}

\subsection{Conjectures}\label{subsec:conj}
We start by observing that there are two $\mb Z/2\mb Z$-graded Batalin--Vilkovisky theories that one can assign to a twist datum (see Definition \ref{dfn:twist-datum}):
\begin{enumerate}
  \item The Costello--Li twist \cite{bvsugra,Kupka:2025ulg} of supergravity in this background, i.e.\ the $L_\infty$ algebra obtained by expanding the BV action of type I supergravity around this critical point.
  \item The Courant contact model associated to the Courant algebroid $\mc E_{\text{twist}}$ over the Dolbeault complex, following Theorem \ref{thm:big}.
\end{enumerate}

\begin{mainconj} 
The Costello--Li twist of type I supergravity is equivalent as a perturbative BV theory to the Courant contact model for $\mc E_{\text{twist}}$.
\end{mainconj}
\noindent Note that the equivalence here is understood in the strongest possible sense, i.e.\ as a quasi-isomorphism of local Lie algebras preserving the symplectic structures.
The main basis of this conjecture is the fact, proven below in Theorem \ref{thm:equivalence}, that in the case of Calabi--Yau manifolds it reduces to the original conjecture of Costello and Li \cite{CLtypeI} on the relation of minimal type I BCOV theory and the twist of type I supergravity. The Courant contact model thus provides a natural extension of the minimal type I BCOV theory to the most general twist datum. In terms of ordinary geometric data (see Remark \ref{rmk:ordinary-data}) this corresponds to twists on backgrounds with possibly nontrivial fluxes --- more precisely possibly nontrivial $2$-form, dilaton, and $G$-connection.

In addition, the present formalism allows one to regard the Main Conjecture as a comparison between two natural constructions, both set within the realm of Courant algebroids. Since the latter framework in particular provides very efficient calculational tools, it is natural to expect that it also offers the easiest and most direct way to tackle this conjecture (and thus in particular the Costello--Li conjecture).

Finally, the comparison Theorem \ref{thm:equivalence} provides an interesting insight into the origin of the minimal type~I BCOV theory, in terms of a symplectification of the contact structure on the Roytenberg--Weinstein $L_\infty$ algebra, or in terms of the moduli space of volume forms modulo generalized diffeomorphisms.

It is also natural to expect a relationship between the Courant contact model and several works which appeared recently in the literature.
Notably, in \cite{Ashmore:2025fxr} a cubic model with constrained fields was introduced based on the previous works \cite{delaOssa:2014cia,Ashmore:2018ybe} and it was shown to be equivalent (as a sheaf of $L_\infty$ algebras) to the minimal type I BCOV theory in the Calabi--Yau case.\footnote{Due to the constrained field space, this model does not define a BV theory; see Remark~\ref{rmk:resolved}. Correspondingly, the notion of equivalence is different.} The Courant contact model can thus be regarded as a version of the same story with unconstrained fields, thus satisfying the requirements of Definition \ref{dfn:local-lie-algebra}. This leads us to the following conjecture.
\begin{conj}
  The Courant contact model is quasi-isomorphic (as sheaves of $L_\infty$ algebras) to the cubic model of \cite{Ashmore:2025fxr}.
\end{conj}

Another related construction, inspired by generalized $G$-structures, is the \emph{extended BPS complex} of \cite{Kupka:2024rvl} for $G=U(5)\times SO(10+k)$, corresponding to the involutive subbundle $L$ \cite{Ashmore:2019rkx}. One immediately observes that the vector space underlying this BPS complex coincides with the space of fields \eqref{eq:M}, naturally leading to:
\begin{conj}\label{conj:bps}
  The complex underlying the Courant contact model is quasi-isomorphic to the extended BPS complex for $G=U(5)\times SO(10+k)$.
\end{conj}
\noindent The extended BPS complex was constructed in an ad-hoc fashion, by appending $\mc O^!$ as an extra row in the discussion of the superpotentials in \cite{Kupka:2024rvl}. The present construction would thus provide a natural geometric rationale for this procedure, as the symplectification of a contact structure.

\begin{rmk}
One could imagine the supersymmetry conditions for the background as being split into two sets, analogous to the F-term and D-term conditions of the six-dimensional construction in~\cite{Ashmore:2019rkx} (which are named in this way because they are the origin of the corresponding conditions in the four-dimensional $\ms N = 1$ supergravity theories obtained by compactification on them). 
In the ten-dimensional case, the F-terms would naturally be identified as those whose variations appear in the part of \eqref{eq:M} in the relevant degree.  
The Courant contact model should thus be related to the formal moduli problem for backgrounds satisfying these F-term supersymmetry conditions in the same way that Baulieu's theory in~\cite{BaulieuCS} (the holomorphic twist of $\N=(1,0)$ super Yang--Mills theory in ten dimensions) is related to the corresponding moduli problem for holomorphic bundles.
\end{rmk}

\subsection{Review of minimal type I BCOV theory}
We recall the definition of minimal type I BCOV theory on a Calabi--Yau fivefold $(M,\Omega)$, i.e.\ a complex 5-dimensional manifold $M$ with a holomorphic volume form $\Omega$. Our theory is a one-dimensional central extension of the original theory studied in~\cite{CLtypeI}, and admits a nondegenerate odd symplectic structure. For details on  this particular ``potential theory,'' see~\cite[\S5.4.1]{RSW}, as well as~\cite{CLtypeI} for the original construction and detailed discussion.

First, we introduce the holomorphic polyvector fields
\[\PV^{i,\bullet}\coloneqq\Omega^{0,\bullet}(\Lambda^iT^{1,0}M).\]
Contraction with the holomorphic volume form $\Omega$ induces an isomorphism 
    \begin{equation}\label{pv-vs-omega}
        \begin{tikzcd}
            \PV^{i,\bu} \arrow[r,"\cong"] & \Omega^{n-i,\bu}
        \end{tikzcd}
    \end{equation}
The space of fields of the BCOV theory is then described by the following sheaf of $\mb Z/2\mb Z$-graded locally free modules over the Dolbeault complex $\mc O=(\Omega^{0,\bullet},\bar\partial)$:
\begin{equation}\label{bcov-chain}
  \Pi\ms M_{\text{BCOV}}\coloneqq\Omega^{0,\bu} \oplus \Pi(\Omega^{1,\bu} \oplus \PV^{1,\bu}) \oplus \PV^{0,\bu}.
\end{equation}
We will call the individual fields $(\beta, \gamma, \mu, \nu)$.\footnote{Note that the antiholomorphic form part further shifts the parity as appropriate, so that e.g.\ the $\Omega^{1,3}$-part of $\gamma$ is even.}

 This theory is odd-shifted symplectic, with $\beta$ being the antifield of~$\nu$ and $\gamma$ the antifield of~$\mu$. To see that this pairing is odd, it is important to recall that we are on a fivefold, so that the wedge-and-integrate pairing on Dolbeault forms has odd degree. The pairing and action are defined by integrating against the fixed background holomorphic volume form $\Omega$:
    \begin{equation}\label{bcov}
    \omega = \int \Omega \left( \delta \beta\, \delta \nu + \delta \mu \vee \delta \gamma \right),\qquad S = S_{\bar\partial}+\int \Omega \left( \lo_\mu \beta + \frac12 \frac{1}{1+\nu} i_\mu i_\mu \del \gamma \right),
    \end{equation}
    where $S_{\bar\partial}$ is the functional whose Hamiltonian vector field is the operator $\bar\partial$ on the field space and, following the literature, we use $\vee$ to denote the pairing of vectors with 1-forms. Finally, $\lo$ corresponds to the ``holomorphic Lie derivative'' (e.g.\ if $x\in \Gamma(T^{1,0}M)$ then on forms $\lo_x= i_x\partial+\partial i_x$), which has been extended naturally to the Dolbeault resolutions of the corresponding objects (cf.\ Example \ref{ex:long}).
    \begin{rmk}
      In particular it follows that the cochain complex structure on $\Pi\ms M_{\text{BCOV}}$ is given by
    \begin{equation*}
        \begin{tikzcd}
            \Omega^{0,\bu} \arrow[r,"\del"] & \Omega^{1,\bu} \oplus \PV^{1,\bu} \arrow[r,"\div"] & \PV^{0,\bu},
        \end{tikzcd}
    \end{equation*}
    where the \emph{divergence operator} $\div$ is defined with reference to the isomorphism \eqref{pv-vs-omega}.
    \end{rmk}
    
\subsection{Equivalence between the Courant contact model and the BCOV theory}
Let us first recall the Courant contact model for the standard Courant algebroid on $M_{\bar\partial}$ on a Calabi--Yau fivefold $(M,\Omega)$ (cf.\ Corollary \ref{cor:cy}). The field space is
\[\Pi\ms M=\Omega^{0,\bullet}\oplus \Pi(\Omega^{1,\bullet}\oplus \PV^{1,\bullet})\oplus \Omega^{5,\bullet},\]
with the fields denoted $(f,\alpha,x,\lambda)$.
\begin{lem}
  The symplectic form and the action functional of the Courant contact model for the standard Courant algebroid on $M_{\bar\partial}$ are (here again $S_{\bar\partial}$ denotes the functional generating the $\bar\partial$ operator)
  \[\omega=\int_M\delta\lambda(\delta f+\tfrac12x\vee \delta\alpha+\tfrac12\delta x\vee\alpha)-\lambda\delta x\vee\delta\alpha,\qquad S=S_{\bar\partial}+\int_M\lambda\left(\lo_x f+\tfrac12 x\vee \lo_x\alpha\right).\]
\end{lem}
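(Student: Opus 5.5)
This is a direct specialization: substitute $\xi = x + \alpha$ into the general formulas \eqref{eq:omega} for $\omega$ and Theorem~\ref{thm:cme} for $S$, with $\cE$ the standard Courant algebroid over $M_\dbar$ from Example~\ref{ex:long} and $n=5$. Throughout I write formulas as if all fields were even, as the paper does, and track only the signs that survive.

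\emph{Symplectic form.} The Liouville form \eqref{eq:theta} is $\vartheta=\int\lambda(\delta f+\tfrac12\langle\xi,\delta\xi\rangle)$. With the pairing $\langle x+\alpha,\delta x+\delta\alpha\rangle = x\vee\delta\alpha+\delta x\vee\alpha$ this gives
\[\vartheta=\int\lambda\left(\delta f+\tfrac12 x\vee\delta\alpha+\tfrac12\delta x\vee\alpha\right).\]
Applying $\delta$ produces the first bracket of the claimed $\omega$ from $\delta\lambda$. The remaining term $\tfrac{(-1)^n}{2}\lambda\langle\delta\xi,\delta\xi\rangle$ has $\langle\delta\xi,\delta\xi\rangle = 2\,\delta x\vee\delta\alpha$, up to the Koszul sign from commuting two one-forms on field space. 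Setting $n=5$ then gives $-\lambda\,\delta x\vee\delta\alpha$. The one point needing care is this sign for the odd pairing, which I would fix using the conventions of \cite{Kupka:2025ulg}.

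\emph{Action.} The $S_{\bar\partial}$ term is just the preceding lemma. For $S_0=\int\lambda(\ms L_\xi f+\tfrac16\langle\xi,\ms L_\xi\xi\rangle)$, first note $\ms L_\xi f=\rho(\xi)f=\lo_x f$. Next, from the bracket formulas of Example~\ref{ex:long}, the bracket of the standard algebroid gives
\[\ms L_\xi\xi=\lo_x x+\lo_x\alpha-i_x\partial\alpha = \lo_x\alpha-i_x\partial\alpha ,\]
using $\lo_x x=[x,x]=0$ for even $x$. Pairing with $\xi$:
\begin{itemize}
\item The vector part of $\ms L_\xi\xi$ vanishes, so only $x\vee(\lo_x\alpha-i_x\partial\alpha)$ survives.
\item Since $i_xi_x\partial\alpha=0$, this equals $x\vee\lo_x\alpha$.
\end{itemize}
Hence $S_0$ reproduces $\lambda(\lo_x f+\tfrac16x\vee\lo_x\alpha)$. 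This does \emph{not} match the claimed coefficient $\tfrac12$, and reconciling the two is the main obstacle.

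\emph{Reconciling the coefficient.} First I would recheck whether $\langle\xi,\ms L_\xi\xi\rangle$ really gives $x\vee\lo_x\alpha$ or $3x\vee\lo_x\alpha$. Writing $\lo_x\alpha=i_x\partial\alpha+\partial i_x\alpha$ shows
\[x\vee\lo_x\alpha = x\vee\partial(x\vee\alpha),\]
which is the first thing to verify. If the discrepancy persists, I would integrate by parts against $\lambda$ using $\int\lambda\,\lo_x g=-\int(\lo_x\lambda)g$. I would also use the failed-antisymmetry identity from the proof of Theorem~\ref{thm:cme}, namely $\delta\langle\ms L_\xi\xi,\xi\rangle=3\langle\ms L_\xi\delta\xi,\xi\rangle$. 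The aim is to identify $\tfrac16\langle\xi,\ms L_\xi\xi\rangle$ with $\tfrac12x\vee\lo_x\alpha$ up to total derivatives or a harmless field redefinition. Concretely, $\tfrac16\langle\xi,[\xi,\xi]\rangle$ with the Dorfman bracket should give
\[\tfrac16\left(x\vee\lo_x\alpha+x\vee\lo_x\alpha+x\vee\lo_x\alpha\right),\]
where the three contributions come from the three cyclic insertions of $\alpha$ after polarization. If so, $\tfrac16\cdot 3=\tfrac12$ and the formula is exact; confirming this polarization count is the step I expect to take the most care.
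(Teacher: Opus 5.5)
Your handling of $\omega$, of $S_{\bar\partial}$, and of $\lambda\,\ms L_\xi f=\lambda\,\lo_x f$ is fine. The gap is in the cubic term, and none of your proposed fixes can close it.

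You evaluate $\la\xi,\ms L_\xi\xi\ra$ with $\xi=x+\alpha$ even, using $\lo_x x=0$ and $i_xi_x=0$. For an even field your answer $x\vee\lo_x\alpha$ is correct. Failed antisymmetry gives $\la\xi,[\xi,\xi]\ra=\tfrac12\rho(\xi)\la\xi,\xi\ra=x\vee\partial(x\vee\alpha)$, a single term. So that computation yields the coefficient $\tfrac16$, and nothing turns it into $\tfrac12$:
\begin{itemize}
\item The two candidate actions differ by $\tfrac13\int\lambda\,x\vee\lo_x\alpha$. This is a nonzero local functional, so integrating by parts against $\lambda$ cannot remove it.
\item A field redefinition would prove a different statement.
\item A cubic functional of an even field evaluated on the diagonal has no ``polarization'' freedom.
\end{itemize}
Your ``three cyclic insertions'' is asserted, not derived, and it is false in the even setting you are working in.

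The missing point is that the BV field $\xi$ lives in $\cE[1]$. In the statement, $x$ and $\alpha$ sit in $\Pi(\PV^{1,\bullet}\oplus\Omega^{1,\bullet})$, so $\xi$ has odd total parity. The convention of writing formulas as if everything were even only suppresses Koszul signs; it does not make $\xi$ even. The proof of Theorem~\ref{thm:cme} already relies on oddness, for instance through $\ms L_{\ms L_\xi\xi}=2\ms L_\xi\ms L_\xi$, which would be $0=2\ms L_\xi^2$ for even $\xi$.

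For odd $\xi$ one has $\lo_x x\neq0$ and $i_xi_x\neq0$, and with the graded signs
\[\la\xi,\ms L_\xi\xi\ra=\alpha\vee\lo_x x+x\vee(\lo_x\alpha+i_x\partial\alpha)=(i_{\lo_x x}+i_x\lo_x+i_xi_x\partial)\alpha.\]
Now use two identities:
\begin{itemize}
\item $i_{\lo_x x}=\lo_xi_x+i_x\lo_x$;
\item $i_xi_x\partial=i_x\lo_x-i_x\partial i_x$, which follows from $\lo_x=i_x\partial+\partial i_x$.
\end{itemize}
This gives $(\lo_xi_x+3\,i_x\lo_x-i_x\partial i_x)\alpha$. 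Since $\lo_x$ acts on the function $i_x\alpha$ as $i_x\partial$, the outer terms cancel, leaving $\la\xi,\ms L_\xi\xi\ra=3\,i_x\lo_x\alpha$, and hence $\tfrac16\cdot 3=\tfrac12$.

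This is the genuine source of your factor $3$. For odd $\xi$ the cubic expression is graded-antisymmetrized over its three slots. As a result, all three terms $\alpha\vee\lo_x x$, $x\vee\lo_x\alpha$ and $x\vee i_x\partial\alpha$ contribute. In the even case only the middle one survives.
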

\begin{proof}
  The only non-immediate part is the last term in $S$, in which we use $i_{\lo_x x}=\lo_x i_x+i_x \lo_x$ to write
  \[\la \xi,\ms L_\xi\xi\ra=\alpha\vee \lo_xx+x\vee (\lo_x\alpha+i_x\partial\alpha)=(i_{\lo_xx}+i_x\lo_x+i_xi_x\partial)\alpha=(\lo_x i_x+3i_x\lo_x-i_x\partial i_x)\alpha=3i_x\lo_x\alpha.\qedhere\]
\end{proof}
  \begin{thm}\label{thm:equivalence}
    Let $(M,\Omega)$ be a Calabi--Yau fivefold. Then the field redefinition from $(f,\alpha,x,\lambda)$ to $(\beta,\gamma,\mu,\nu)$ given by
    \[f=\beta+\frac{\mu\vee\gamma}{2(\nu-1)},\qquad \alpha=-\gamma,\qquad x=\frac\mu{\nu-1},\qquad \lambda=\Omega(\nu-1)\]
    defines an equivalence of BV theories between the Courant contact model for the standard Courant algebroid over $M_{\bar\partial}$ and the minimal type I BCOV theory.
  \end{thm}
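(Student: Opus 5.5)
The plan is to check three things in order: the substitution is an invertible formal change of coordinates, it carries the symplectic form of the Courant contact model to the one in~\eqref{bcov}, and it carries the action $S$ of the preceding lemma to the BCOV action.

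\textbf{Invertibility.} At the background point $\beta=\gamma=\mu=\nu=0$ the map gives $f=\alpha=x=0$ and $\lambda=-\Omega$. This is a nondegenerate $5$-orientation, so we are expanding around an admissible $\lambda_0$. The inverse is explicit:
\[
\nu=1+\lambda/\Omega,\qquad \mu=(\lambda/\Omega)\,x,\qquad \gamma=-\alpha,\qquad \beta=f+\tfrac12\, x\vee\alpha .
\]
Here $\lambda/\Omega\in\Omega^{0,\bu}$ is defined by $\Omega$-division and is invertible near $-1$. Hence the map is an isomorphism of the sheaves of formal moduli problems, and its linearization is an isomorphism of the underlying $\cO$-modules. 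The map is built only from wedge products, contractions and multiplication by the holomorphic form $\Omega$. It therefore commutes with $\dbar$. By the lemma identifying $S_{\dbar}$ with the Hamiltonian of $\dbar$, the $\dbar$-parts of the two actions then correspond automatically once the symplectic forms are matched.

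\textbf{Symplectic forms.} I would not compare $\omega$ directly; instead I would pull back the Liouville form $\vartheta=\int\lambda\big(\delta f+\tfrac12(x\vee\delta\alpha+\delta x\vee\alpha)\big)$. Substituting, $\lambda\,\delta f$ produces $\Omega(\nu-1)\delta\beta$ plus $\tfrac12\Omega(\nu-1)\,\delta\!\big(\mu\vee\gamma/(\nu-1)\big)$. The pairing term produces $-\tfrac12\Omega\big(\mu\vee\delta\gamma+(\nu-1)\,\delta(\mu/(\nu-1))\vee\gamma\big)$. The aim is to show that all $\delta\nu/(\nu-1)^2$ terms cancel. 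The expected outcome is
\[
\vartheta=\int\Omega\big((\nu-1)\delta\beta\pm\mu\vee\delta\gamma\big)
\]
modulo $\delta$-exact terms, so that $\delta\vartheta$ equals $\int\Omega(\delta\beta\,\delta\nu+\delta\mu\vee\delta\gamma)$ up to an overall sign convention. This computation is sensitive to Koszul signs for the odd fields $\alpha,x,\gamma,\mu$ and to the sign in the identification $\ms M[1]$ versus $\Pi\ms M$. I would fix these by doing the computation on generators, following the conventions of~\cite{Kupka:2025ulg}, and checking against the sign of the $\lambda\,\delta x\vee\delta\alpha$ term in the lemma.

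\textbf{Action, and the main obstacle.} The remaining step is to substitute into $S_0=\int\lambda\big(\lo_x f+\tfrac12\, x\vee\lo_x\alpha\big)$. Since $\lambda\,x=\Omega\,\mu$, the first term gives $\int\Omega\,\lo_\mu\big(\beta+\mu\vee\gamma/(2(\nu-1))\big)$. Here I would use $\lo_\mu g=i_\mu\partial g$ on functions and integrate by parts against $\Omega$, via the identity $\int\Omega\,\lo_\mu g=-\int \Omega\, g\,\div\mu$ (the Lie derivative of $\Omega$ along $\mu$ is $(\div\mu)\,\Omega$). This yields $\int\Omega\,\lo_\mu\beta$ plus terms containing $\partial\nu$, $\partial\mu$ and $\partial\gamma$. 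The second term becomes $-\tfrac12\int\Omega\,(\nu-1)^{-1}\,\mu\vee\lo_{\mu}\gamma$, after moving the $1/(\nu-1)$ out of $\lo_x$ and using $i_\mu i_\mu=0$ on $\partial$ of a scalar. With $\lo_\mu\gamma=i_\mu\partial\gamma+\partial i_\mu\gamma$, the aim is that all terms involving $\partial i_\mu\gamma$, $\partial\nu$ and $\div\mu$ cancel between the two contributions, by the same mechanism as the identity $\la\xi,\ms L_\xi\xi\ra=3\,i_x\lo_x\alpha$. What should survive is $\tfrac12\int\Omega\,(1-\nu)^{-1}\, i_\mu i_\mu\partial\gamma$, i.e.\ the BCOV interaction up to the sign conventions already fixed in the previous step, and up to $\nu\mapsto-\nu$ should they require it. This cancellation, with its graded signs and the nonpolynomial dependence on $\nu$, is the step I expect to be the main obstacle. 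If the bookkeeping becomes unmanageable, I would fall back on comparing Hamiltonian vector fields: push~\eqref{eq:QBV} forward along the map and match it with the BCOV vector field on each of $\beta,\gamma,\mu,\nu$. Because the map is a symplectomorphism, the two actions then differ by a constant, and that constant vanishes at the origin.
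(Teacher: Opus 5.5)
Your route is the paper's own (its proof only asserts a direct calculation). Your symplectic step goes through. Keeping $\mu,\gamma$ odd, the $\delta\nu$-terms cancel and the $\delta\mu\vee\gamma$-terms cancel, while the two $\mu\vee\delta\gamma$-terms add. So $\vartheta$ pulls back exactly to $\int\Omega\big((\nu-1)\,\delta\beta-\mu\vee\delta\gamma\big)$, with no exact correction. With naive even signs everything except $(\nu-1)\delta\beta$ would cancel, so your caution there is warranted. The problems are in the action step.

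\textbf{The dropped term.} Your claim that $\tfrac12\int\lambda\,x\vee\lo_x\alpha$ equals $-\tfrac12\int\Omega\,(\nu-1)^{-1}\mu\vee\lo_\mu\gamma$ is wrong. Set $h=(\nu-1)^{-1}$. Then $\lo_{h\mu}\gamma=h\,\lo_\mu\gamma+(\partial h)\,i_\mu\gamma$, hence $\mu\vee\lo_{h\mu}\gamma=h\,\mu\vee\lo_\mu\gamma\pm(i_\mu\partial h)(i_\mu\gamma)$.
\begin{itemize}
\item The extra term is not $i_\mu i_\mu\partial h$. It is a product of two different contractions, equal up to sign to $\tfrac12\, i_\mu i_\mu(\partial h\wedge\gamma)$.
\item It is nonzero because $\mu$ is odd. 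If $i_\mu i_\mu$ vanished on $2$-forms, the BCOV vertex $i_\mu i_\mu\partial\gamma$ would vanish too.
\item It is exactly the term that cancels the $\partial\nu$-contribution $\tfrac12\Omega\,(\lo_\mu h)(i_\mu\gamma)$ coming from $\Omega\,\lo_\mu f$. If you drop it and integrate by parts, a residual proportional to $\int\Omega\,(i_\mu\partial h)(i_\mu\gamma)$ survives, and nothing cancels it.
\end{itemize}
No integration by parts is needed. Using $\lambda\lo_xf=\Omega\lo_\mu f$ and $\lambda x=\Omega\mu$, the non-$\beta$ part of the integrand is pointwise $\tfrac12\Omega\big(\lo_\mu(h\,i_\mu\gamma)-i_\mu\lo_{h\mu}\gamma\big)=\tfrac12\Omega\,h\,(\lo_\mu i_\mu-i_\mu\lo_\mu)\gamma=-\tfrac12\Omega\,h\,i_\mu i_\mu\partial\gamma$, up to Koszul signs.

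\textbf{The $\nu\mapsto-\nu$ clause.} ``Up to $\nu\mapsto-\nu$'' is not an admissible step. After the substitution, $\nu$ enters only through $h$. So the image of $S$ is $S_{\bar\partial}+\int\Omega\big(\lo_\mu\beta+c\,(\nu-1)^{-1}i_\mu i_\mu\partial\gamma\big)$ with $c=\pm\tfrac12$. For no constant $c$ does this give the vertex $\tfrac12(1+\nu)^{-1}i_\mu i_\mu\partial\gamma$ of~\eqref{bcov}. With the formulas as printed, the mismatch you sensed is real; the paper's one-line proof is silent on it.

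However, $\nu\mapsto-\nu$ alone does not preserve $\int\Omega\,\delta\beta\,\delta\nu$. The legitimate move is to compose with $(\beta,\gamma,\mu,\nu)\mapsto-(\beta,\gamma,\mu,\nu)$. This preserves $\omega$, $S_{\bar\partial}$ and $\int\Omega\,\lo_\mu\beta$, and sends $c\,(\nu-1)^{-1}$ to $c\,(1+\nu)^{-1}$. It reproduces~\eqref{bcov} exactly if and only if $c=+\tfrac12$, and the corrected redefinition is then $\lambda=-\Omega(1+\nu)$, $x=\mu/(1+\nu)$, $\alpha=\gamma$, $f=-\beta-\mu\vee\gamma/(2(1+\nu))$.

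So you must fix a sign convention for $i_\mu i_\mu$ with $\mu$ odd, actually determine $c$, and state the map that works. As written, the proposal does not establish the statement.
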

  
  \begin{proof}
    This is a direct calculation. In particular the holomorphy of the transformation ensures that the $\bar\partial$-generating parts $S_{\bar\partial}$ of the corresponding functionals map into each other. The field redefinitions define an $L_\infty$ quasi-isomorphism between the corresponding local formal moduli problems. While they do \emph{not} define a canonical transformation, one checks straightforwardly that they intertwine the respective odd symplectic structures.
  \end{proof}
  \begin{rmk}
    We note that the theory given by \eqref{bcov-chain} and~\eqref{bcov} can be defined in principle on any Calabi--Yau $n$-fold $M$ and gives a shifted symplectic structure with parity of the symplectic form coinciding with the parity of $n$. 
    (We emphasize, though, that this theory only agrees with type I BCOV theory when $n=5$.)
    Similarly, the field space of the Courant contact model for the standard Courant algebroid over $M_{\bar\partial}$ is given by
    \[\Pi\ms M=\Omega^{0,\bullet}\oplus \Pi(\Omega^{1,\bullet}\oplus \PV^{1,\bullet})\oplus \Omega^{n,\bullet},\]
    with the symplectic form and action
    \[\omega=\int_M\delta\lambda(\delta f+\tfrac12x\vee \delta\alpha+\tfrac12\delta x\vee\alpha)+(-1)^n\lambda\delta x\vee\delta\alpha,\qquad S=S_{\bar\partial}+\int_M\lambda\left(\lo_x f+\tfrac12 x\vee \lo_x\alpha\right).\]
    The equivalence of the two models is then established via
    \[f=\beta-\frac{\mu\vee\gamma}{2(1-\nu)},\qquad \alpha=(-1)^n\gamma,\qquad x=(-1)^n\frac\mu{1-\nu},\qquad \lambda=(-1)^n\Omega(1-\nu).\]
     \end{rmk}

\printbibliography

\appendix

\end{document}